\definecolor{orange}{cmyk}{0,0.5,1,0}
\definecolor{green}{cmyk}{1,0.4,.8,0}
\definecolor{blue}{rgb}{0.2,0.3,0.8}
\definecolor{red}{rgb}{0.8,0.1,0.1}
\newcommand{\thdc}[3]{\;\raisebox{-2ex}{\shortstack[c]{\ensuremath{#1}\footnotesize{=#2} \\  \ensuremath{\gtrless} \\ \ensuremath{#1}\footnotesize{=#3}}} \;}
\def \ba {\bm {a}}
\newcommand{\bcP}{\bm{{\cal P}}}
\newcommand{\bpd}{\bm{p}_{\bm{d}}}
\newcommand{\bpf}{\bm{p}_{\bm{f}}}
\newcommand{\Var}{\mbox{Var}}
\newcommand{\cN} {{\cal N}}
\newcommand{\cO} {{\cal O}}
\newcommand{\cP} {{\cal P}}
\newcommand{\cH} {{\cal H}}
\newcommand{\bI}{\bm{I}}
\newcommand{\bA}{\bm{A}}
\newcommand{\bD}{\bm{D}}
\newcommand{\bK}{\bm{K}}
\newcommand{\bH}{\bm{H}}
\newcommand{\bg}{\bm{g}}
\newcommand{\bs}{\bm{s}}
\newcommand{\bQ}{\bm{Q}}
\newcommand{\bq}{\bm{q}}
\newcommand{\by}{\bm{y}}
\newcommand{\bh}{\bm{h}}
\newcommand{\bx}{\bm{x}}
\newcommand{\bn}{\bm{n}}
\newcommand{\bb}{\bm{b}}
\newcommand{\bou}{\bm{u}}
\newcommand{\bow}{\bm{w}}
\newcommand{\bz}{\bm{z}}
\newcommand{\bSigma}{\bm{\Sigma}}
\newcommand{\bLa}{\bm{\Lambda}}
\newcommand{\bphi}{\bm{\phi}}
\newcommand{\bpsi}{\bm{\psi}}
\newcommand{\bmu}{\bm{\mu}}
\newcommand{\bTh}{\bm{\Theta}}
\newcommand{\bone}{\bm{1}}
\newcommand{\bzero}{\bm{0}}
\newcommand{\mE}{\mathbb E}
\newcommand{\mP}{\mathbb P}
\newcommand {\DIAG}{\mbox{DIAG}}
\newcommand {\MDC}{\mbox{MDC}}
\newcommand {\Rel}{\mbox{Re}}
\newcommand {\mW}{\mbox{mW}}
\def\Var{\mbox{Var}}
\newtheorem{lem}{Lemma}
\title{On Power Allocation for Distributed Detection with Correlated Observations and Linear Fusion \thanks{This work is supported by the National Science Foundation under grants CCF-1341966 and CCF-1319770.}}
\author{\normalsize Hamid R. Ahmadi, ~\IEEEmembership{Member,~IEEE,} Nahal Maleki, ~\IEEEmembership{Student Member,~IEEE,}  Azadeh Vosoughi,~\IEEEmembership{Senior Member,~IEEE} \\
%\normalsize ${\dag}$  XXXXX, $^{\star}$  University of Rochester, ${\ddag}$ University of Central Florida\\

%\thanks{This work is supported by National Science Foundation under Grant CCF-1341966 and CCF-1319770.}
%
}
\begin{document}
	
\maketitle

\vspace{-2.3cm}
%\begin{abstract}
%===========================================================================================
%
\begin{abstract}
We consider a binary hypothesis testing problem in an inhomogeneous wireless sensor network, where a fusion center (FC) makes a global decision on the underlying hypothesis. We assume sensors' observations are correlated Gaussian and sensors are unaware of this correlation when making decisions. Sensors send their modulated decisions over fading channels, subject to individual and/or total transmit power constraints. For parallel-access channel (PAC) and multiple-access channel (MAC) models, we derive modified deflection coefficient (MDC) of the test statistic at the FC with coherent reception. We propose a transmit power allocation scheme, which maximizes MDC of  the test statistic, under three different sets of transmit power constraints: total power constraint, individual and total power constraints, individual power constraints only. When analytical solutions to our constrained optimization problems are elusive, we discuss how these problems can be converted to convex ones. We study how correlation among sensors' observations, reliability of local decisions, communication channel model and channel qualities and transmit power constraints affect the reliability of the global decision and power allocation of inhomogeneous sensors.
\end{abstract}
\begin{IEEEkeywords}
%\vspace{-0.1cm}
Distributed detection, coherent reception, modified deflection coefficient, power allocation, correlated observations, linear fusion, parallel-access channel, multiple-access channel.
\end{IEEEkeywords}
\vspace{-0.5cm}
\section{Introduction}
\vspace{-0.1cm}
The classical problem of binary distributed detection in a network consisting of multiple distributed sensors and a fusion center (FC), has a long and rich history. Each sensor (local detector)
processes its single observation locally and passes its binary decision to the FC, that is tasked with fusing the binary decisions received from the individual sensors and deciding which of the two underlying hypotheses is true \cite{Varshney_Book, Viswanathan_PartI, yan2001distributed}. Motivated by the potential application of wireless sensor networks (WSNs) for event monitoring, researchers have further studied this problem and extended its setup, taking into account that bandwidth-constrained communication channels between sensors and the FC are error-prone, due to limited transmit power to combat noise and fading (so-called channel aware binary distributed detection \cite{Magazine-Paper, Perfect-CSI, Statistics-CSI}). Given each sensor makes its binary decision based on one local observation, they have investigated how the reliability of the final decision  at the FC is affected by performance indices of local detectors (sensors) as well as wireless channel properties. Following these works, we consider channel aware binary distributed detection in a WSN with coherent reception at the FC \cite{Ahmadi_Vosoughi_SPL, Hamid-topo, Hamid-uncertainty}. In this paper, our goal is to study transmit power allocation, when each sensor has an individual transmit power constraint and/or all sensors have a joint transmit power constraint, such that the reliability of the final decision at the FC is maximized.

%===========power allocation in Distributed detection===============

Power allocation for channel aware binary distributed detection in WSNs has been studied in \cite{Poor_PA, Kim-Det}. More specifically, \cite{Poor_PA} studied the power allocation that maximizes the J-divergence between the distributions of the received signals at the FC under two different hypotheses, subject to individual and total transmit power constraints on the sensors, with parallel access channel (PAC) \footnote{In PAC, channels between the sensors and the FC are orthogonal (non-interfering). This can be realized by either time, frequency, or code division multiple access \cite{PAC-MAC, Xiao2008}.} and coherent reception at the FC (i.e., channel phases are known and compensated at the sensors). Leveraging on \cite{Poor_PA}, \cite{Kim-Det} studied detection outage and detection diversity, as the number of sensors goes to infinity, and sensors have identical performance indices. Note that \cite{Poor_PA, Kim-Det} assume the sensors have uncorrelated observations under each hypothesis.

%=========power allocation in Disributed Estimation=========

Power allocation in WSNs has also been studied for distributed estimation
\cite{Hafeez, Cui, Fang, wu2013power, alirezaei2015optimum, Varshney-Est-spatial, Varshney-Est-spatial-cor, behbahani2012linear, bahceci2008linear, Xiao2008}, where some works minimized the mean square error (MSE) of an estimator subject to certain transmit power constraints \cite{Hafeez, Cui, Fang, wu2013power, alirezaei2015optimum, Varshney-Est-spatial, Varshney-Est-spatial-cor, behbahani2012linear}, while others minimized total transmit power subject to a constraint on the MSE of an estimator \cite{ bahceci2008linear,Xiao2008}. These works, except \cite{Varshney-Est-spatial, Varshney-Est-spatial-cor, Xiao2008}, mainly focus on PAC with coherent reception at the FC. In \cite{Varshney-Est-spatial, Varshney-Est-spatial-cor, Xiao2008}, sensors and the FC are connected differently via a multiple-access channel (MAC), where the individual sensors send their signals simultaneously, albeit after channel phases are compensated at the sensors, and the FC receives the coherent sum of these transmitted signals. Most of these works assume the sensors' observations are uncorrelated, with the exception of \cite{Hafeez, Fang, bahceci2008linear}. In \cite{Varshney-Est-spatial, Varshney-Est-spatial-cor} sensors collaborate with each other by linearly combining their independent observations before sending to the FC.

%========PAC and MAC in Distributed detection=====================

For binary distributed detection in WSNs, \cite{PAC-MAC} compared the detection performance using both PAC and MAC, with linear fusion rule and noncoherent reception at the FC (i.e., no channel phase compensation at the sensors), albeit without imposing any transmit power constraint. Assuming the sensors' observations are uncorrelated under each hypothesis and the FC utilizes a linear fusion rule when using PAC, \cite{PAC-MAC}  showed that coherent MAC outperforms coherent PAC, whereas noncoherent PAC (MAC) outperforms noncoherent MAC (PAC) when sensors' decisions are (un)reliable.
%
%============correlation in Distributed detection==========
%
Distributed detection with correlated observations has been studied
assuming error-free \cite{yan2001distributed,good_bad,chen2012new} and erroneous communication channels \cite{lin2015distributed}. The focus of these works though is on how to design optimal local and global decisions rules to improve the detection reliability at the FC, assuming sensors know the correlation among their observations.
Different from \cite{ yan2001distributed,good_bad,chen2012new,lin2015distributed} we focus on how to optimally transmit the sensors' decisions to the FC within certain transmit power constraints, with a linear fusion rule at the FC and assuming sensors are unaware of the correlation among their observations.
%they mainly focus on finding optimal sensor rule in distributed detection which considering correlated observations, they show that it is not likelihood ratio test (LRT)-based. None of the works in the field of distributed detection considered correlation between sensors' observations and its effect on the power allocation.

%======our contributions==========================================
\vspace{-0.1cm}
{\bf Our Contributions}: We consider a binary hypothesis testing problem using $M$ sensors and a FC, where under ${\cal H}_0$, sensors' observations are uncorrelated Gaussian with covariance matrix $\sigma_0^2\bI$ and under ${\cal H}_1$ they are correlated Gaussian with a non-diagonal covariance matrix $\Sigma$. We relax the assumption in \cite{yan2001distributed,good_bad,chen2012new,lin2015distributed} that sensors know the correlation among their observations and consider a more practical scenario, where the sensors are unaware of such correlation. Sensors send their modulated binary decisions over nonideal fading channels, subject to individual and/or total transmit power constraints. We consider PAC and MAC with coherent reception at the FC, assuming that channel phases are compensated at the sensors similar to \cite{Varshney-Est-spatial, Varshney-Est-spatial-cor,Xiao2008}. To curb the hardware and computational complexity and also have a fair comparison between PAC and MAC, we assume that, when the sensors and the FC are connected via PAC, the FC utilizes a linear fusion rule to obtain the global test statistic $T$. We propose a transmit power allocation scheme, which maximizes modified deflection coefficient (MDC) of  $T$. We choose MDC as the performance metric, since unlike detection probability and J-divergence that require the probability distribution function of $T$, obtaining MDC only needs the first and second order statistics of $T$, and often renders a closed-form expression \cite{Sayed_Linear,picin-df,VVV}. Also, an MDC-based optimization problem can lead into near-optimal solutions for its corresponding detection probability-based optimization problem with much less computational complexity\cite{Sayed_Linear,lin_fading,Ahmadi_Vosoughi_SPL}.
We obtain the MDC of $T$ for coherent PAC and MAC in closed-forms that depends on the correlation among sensors' observations. Considering three different sets of transmit power constraints, we investigate transmit power allocation schemes that maximize the MDC. Under the conditions that analytical solutions to our constrained optimization problems are elusive, we discuss how these problems can be converted to convex ones and thus can be solved numerically.

%========paper organization===========
%\vspace{-0.1cm}
{\bf Paper Organization}:  Section \ref{system-model} details our system model and three different sets of transmit power constraints. Section \ref{derive DF} derives the MDC of $T$ for coherent PAC and MAC in closed-form expressions. Section \ref{power-alloc} formulates three different sets of constrained optimization problems and describes our approach to solve these problems. Section \ref{num-results} presents our numerical results for different correlation values, sensors' observations and communication channel qualities. Section \ref{conc} concludes the paper.

%==========notation============================
%\vspace{-0.1cm}
{\bf Notations}: Scalars, vectors and matrices are denoted by non-boldface lower, boldface lower, and boldface upper case letters, respectively. A Gaussian random vector $\bx$ with mean vector $\bmu$ and covariance matrix $\bSigma$ is shown as $\bx \sim \cN(\bmu,\bSigma)$. Transpose and complex conjugate transpose (Hermitian) of vector $\ba$ are denoted as $\ba^T$ and $\ba^H$, respectively. $\DIAG\{\ba\}$ represents a diagonal matrix
whose diagonal elements are the components
of column vector $\ba$. $\bA\succ 0$ ($\bA\succeq 0$ ) indicates that $\bA$ is a positive (semi-)definite matrix. $\ba \succ \bb$ ($\ba \succeq \bb$) indicates that each entry of $\ba$ is greater than (or equal to) the corresponding entry of $\bb$. $\Rel\{x\}$ is the real part of $x$. $\bzero\!=\![0,...,0]^T$ and $\bone \!=\! [1,...,1]^T$ are two $M \times 1$ vectors.
% and $\bI$ is an identity matrix of size $M$.
The $(i,j)$ entry of matrix $\bA$ is indicated with $[\bA]_{ij}$.
%$\otimes$ is the Kronecker product.
For vector $\ba$ we have $||\ba||^2=\ba^T\ba$ and $||\ba||=\sqrt{\ba^T\ba}$.
%
%==================================================================
%
\vspace{-0.8cm}
\section{System Model and Problem Statement}\label{system-model}
\vspace{-0.4cm}
%We consider the problem of testing two hypotheses ${\cal H}_0$ and ${\cal H}_1$.
Our system model consists of an FC and $M$ distributed sensors with observation vector $\bx= [x_1,x_2,...,x_M]^T$. The FC is tasked with solving the binary hypothesis testing problem
${\cal H}_0 : \bx \sim {\cal N}(0,\sigma_0\bI), {\cal H}_1 : \bx \sim {\cal N}(0,\bSigma)$,
where $\sigma_0$ is the variance under ${\cal H}_0$ and $\bSigma$ is a non-diagonal covariance matrix under ${\cal H}_1$ with diagonal entries different from $\sigma_0$, i.e., under  ${\cal H}_1$ (${\cal H}_0$) sensors' observations are correlated (uncorrelated) Gaussian variables with different energy levels. Suppose sensor $k$, only based on its own observation $x_k$, makes a binary decision \cite{Poor_PA,Perfect-CSI,PAC-MAC} and maps it to $u_k=1$ ($u_k=0$) when it decides ${\cal H}_1$ (${\cal H}_0$), i.e., we assume that sensor $k$ is unaware of the correlation among sensors' observations, $\bSigma$. We denote $p_{f_k}\!=\!\mP(u_k=1|{\cal H}_0)$ and $p_{d_k}\!=\!\mP(u_k=1|{\cal H}_1)$ as the false alarm and detection probabilities of sensor $k$ and assume $p_{d_k} > p_{f_k}$.
The decision $u_k$ is communicated to the FC over a fading channel with transmit power ${\cal P}_{t_k}$. Let $h_k = |h_k|e^{j\varphi_k}$ denote the complex fading coefficient corresponding to sensor $k$, with $|h_k|$ and $\varphi_k$ being the channel amplitude and phase, respectively. Let $y_k$ and $y$ denote the channel output corresponding to the channel input $u_k$ and ($u_1, u_2,...,u_M$), when the sensors and the FC are connected via PAC and MAC, respectively. Since channel phases are compensated at the sensors, we have \cite{PAC-MAC}
%
%\begin{align}\label{y-k}
%&\mbox{coherent~PAC}:~y_{k}= \sqrt{{\cal P}_{k}}|h_k| u_k + n_{k}, &&\quad \mbox{noncoherent~PAC}:~y_{k}= \sqrt{{\cal P}_{k}}h_k u_k + n_{k},~k=1,...,M\nonumber\\
%&\mbox{coherent~MAC}:~y= \sum_{k=1}^M\sqrt{{\cal P}_{k}} |h_k| u_k + n,
% &&\quad
%\mbox{noncoherent~MAC}:~y= \sum_{k=1}^M\sqrt{{\cal P}_{k}} h_k u_k + n,
%\end{align}
%
\begin{align}\label{y-k}
&\mbox{PAC}:~y_{k}= \sqrt{{\cal P}_{k}}|h_k| u_k + n_{k},~k=1,...,M \mbox{~~and}\nonumber\\
&\mbox{MAC}:~y= \sum_{k=1}^M\sqrt{{\cal P}_{k}} |h_k| u_k + n
\end{align}
where communication channel noises are $n_{k} \! \sim \! {\cal C}{\cal N}(0,\sigma_{n}^2)$ and $n \! \sim \! {\cal C}{\cal N}(0,\sigma_{n}^2)$. Fading coefficients $h_k$'s and noises $n_{k}$'s and $n$ are all mutually uncorrelated and $h_k$ is assumed to be constant during a detection interval.
%$\mathbb{E}\{h_k\}\!= \! 0$ and $\mathbb{E}\{|h_k|^2\}\! = \! 1$.
Also ${\cal P}_{k} = {\cal P}_{t_k} \theta_k$, where $\theta_k = G d_{FS_k}^{-\epsilon_c}$, $d_{FS_k}$ is the distance between sensor $k$ and the FC, $\epsilon_c$ is the pathloss exponent, and $G$ is a constant.
We assume that the FC obtains a test statistic $T$ from the channel output(s) and makes a global decision $u_0 \in  \{0,1\}$ where $u_0\!=\!1$ and $u_0\!=\!0$ correspond to ${\cal H}_1$ and ${\cal H}_0$, respectively. In particular,
the FC applies $T \thdc{u_0}{1}{0}\tau_0$
where the threshold $\tau_0$ is chosen to maximize  the total detection probability $P_{D_0}\!=\!\mP(u_0=1|{\cal H}_1)$ at the FC, subject to the constraint that the total false alarm probability satisfies $P_{F_0}\!=\!\mP(u_0=1|{\cal H}_0) \leq \beta_F$ at the FC, where $\beta_F \in (0,1)$. In a PAC, we assume that the FC is restricted to utilize a linear fusion rule to obtain the test statistic $T$ \cite{Perfect-CSI,PAC-MAC}. Implementing the linear fusion rule has low complexity and allows a fair comparison between PAC and MAC. Furthermore, the authors in \cite{Perfect-CSI} have shown that, when identical sensors and the FC are connected via PAC, the linear fusion rule is a good approximation to the optimal Likelihood Ratio Test (LRT) rule at low signal-to-noise-ratio (SNR) regime.
We let $T$ be\\
%\vspace{-0.5cm}
%
\begin{align}\label{fusion-statistic}
\mbox{PAC}:~T=\sum_{k=1}^M  \Rel(y_k),\quad \mbox{~MAC}:~T= \Rel(y).
\end{align}
We consider coherent PAC and MAC with channel phase compensation at the sensors \cite{Poor_PA, Perfect-CSI}.
Our goal is to find the transmit powers at sensors such that the MDC of  $T$ is maximized, subject to different sets of power constraints. We refer to these as the {\it MDC-based transmit power allocation}. We consider three different sets of transmit power constraints: ($A$) there is a total power constraint (TPC) such that $\sum_{k=1}^{M}{\cal P}_{t_k} \leq{\cal P}_{tot}$, where ${\cal P}_{tot}$ is the total transmit power budget among sensors,  we refer to this set as TPC; ($B$) there is an individual power constraint  (IPC) for each sensor such that $0\leq {\cal P}_{t_k} \leq{\cal P}_{0_k}$ as well as a TPC $\sum_{k=1}^{M}{\cal P}_{t_k} \leq {\cal P}_{tot}$, where ${\cal P}_{tot} < \sum_{k=1}^M {\cal P}_{0_k}$, we refer to this set as TIPC; ($C$) there are only IPCs for sensors such that $0\leq {\cal P}_{t_k} \leq{\cal P}_{0_k}$,  we refer to this set as IPC.\

Section \ref{derive DF} drives the MDC of $T$ for coherent PAC and MAC. The MDC-based transmit power allocations under these three different sets of power constraints are discussed in Section \ref{power-alloc}.

%=====================================================================
\vspace{-0.2cm}
\section{Deriving Modified Deflection Coefficient}\label{derive DF}
\vspace{-0.2cm}
Before delving into the derivations, we introduce the following definitions and notations. Consider the signal model in (\ref{y-k}) and (\ref{fusion-statistic}). We let $a_k\!=\! \sqrt{{\cal P}_{k}}$, $w_k \! = \! \Rel(n_k)$,  $w \!=\!\Rel(n)$. We define the column vectors $\bh\!=\![h_1,...,h_M]^T$, $|\bh|\!=\![|h_1|,...,|h_M|]^T$, $\by\!=\![y_1,...,y_M]^T$, $\ba\!=\![a_1,...,a_M]^T$, $\bow \!=\! [w_1,...,w_M]^T$, $\bn \!=\! [n_1,...,n_M]^T$, $\bpd \!=\! [p_{d_1}, ..., p_{d_M}]^T$, $\bpf \!= \! [p_{f_1}, ..., p_{f_M}]^T$, $\bou\!= \![u_1, u_2,...,u_M]^T$, $\bcP \!=\! [\cP_1,..., \cP_M]^T$, $\bpsi\!=\![\psi_1,...,\psi_M]$, $\bphi\!=\![\phi_1,...,\phi_M]$, and the square matrix
$|\bH| \!= \! \DIAG\{|\bh|\}$.\

%
%================Coherent PAC and MAC=======================================
We define the MDC of $T$ as \cite{Sayed_Linear}
\begin{equation}\label{DF-conditional}
 \MDC = \frac{\big( \mathbb{E}\{T|{\cal H}_1, \bh\} - \mathbb{E}\{T|{\cal H}_0, \bh\}\big)^2}{\Var\{T|{\cal H}_1, \bh\}},
\end{equation}
where $\mathbb{E}\{.\}$ and $\Var\{.\}$ are performed with respect to the channel inputs $u_k$'s and the channel noises.
%for a given a channel realization $\bh$.
To calculate $\mE\{T|\cH_i, \bh\}$ for $i\!=\!0,1$ and $\Var\{T|\cH_1, \bh\}$ in (\ref{DF-conditional}), we use the Bayes rule and the fact that $\cH_i\rightarrow u_k \rightarrow y_k(y) \rightarrow u_0$ in PAC(MAC) form Markov chains for $i\!=\!0,1$. Hence
\begin{eqnarray}
 \mE\{T|\cH_i, \bh\} \!\!&\!\! =\!\!&\!\!  \sum_{\bou}\mE\{T|\bou, \bh\} \mP(\bou|\cH_i),~~~~i=0,1\label{mean-csi}\\
\Var\{T|\cH_1 , \bh\} \!\!&\!\! =\!\!&\!\!  \sum_{\bou} \bar{\Delta} \mP(\bou|\cH_1), ~\\
\mbox{where}~\bar{\Delta}&=&\mE\left\{\left(T- \mE\{T|\cH_1, \bh\}\right)^2|\bou,\bh\right\}, \label{var0-csi}\nonumber
\end{eqnarray}
and the sums are taken over all values of vector $\bou$. To simplify $\bar{\Delta}$ in (\ref{var0-csi}), we add and subtract $\mE\{T|\bou, \bh\}$ to the terms inside the  parenthesis in (\ref{var0-csi}) and expand the products. We have
\begin{align}\label{var-csi}
&\bar{\Delta} =\\
&\underbrace{\mE\left\{(T - \mE\{T|\bou, \bh\})^2|\bou\right\}}_{\Delta} + \underbrace{(\mE\{T|\bou, \bh\} - \mE\{T|\cH_1, \bh\})^2}_{\Delta'}\nonumber\\
&+ 2 \mE\left\{ (T- \mE\{ T |\bou, \bh\})(\mE\{ T |\bou, \bh\} - \mE\{ T | \cH_1, \bh\}) |\bou \right\}.\nonumber
\end{align}
We observe that the last term in (\ref{var-csi}) is zero. Thus $\bar{\Delta}$ in (\ref{var-csi}) is simplified to $\bar{\Delta}=\Delta+\Delta'$. Using (\ref{mean-csi}), (\ref{var0-csi}) and (\ref{var-csi}), we derive the MDC in the following.
%\vspace{-1cm}
%========cohernt PAC=============================
\subsection{\textbf{PAC}} Considering the signal model in (\ref{y-k}) and (\ref{fusion-statistic}), we have $\Rel(y_k)\!=\!   a_k |h_k|u_k + w_k$ where $ w_k \! \sim \! {\cal N}(0,\frac{\sigma_{n}^2}{2})$. We write  $T \!= \!\ba^T |\bH| \bou + \bone^T \bow$. Therefore
%given $\bou$ and $\bh$, we find
$\mE\{T |\bou, \bh\}\!=\! \ba^T |\bH| \bou$. %and variance $\Var\{ T |\bou, \bh\}\!=\! M\frac{r\sigma_{n}^2}{2}$.
Substituting $\mE\{T |\bou, \bh\}$ into \eqref{mean-csi} and using the facts $\bpd \!=\! \mE\{\bou|\cH_1 \} \!=\!\sum_{\bou} \bou \mP(\bou|\cH_1)$ and $\bpf \!=\! \mE\{\bou|\cH_0 \} \!=\!\sum_{\bou} \bou \mP(\bou|\cH_0)$ we find
\begin{equation}\label{final-mean-case-A}
\mE\{T|\cH_1, \bh\} \!=\! \ba^T |\bH| \bpd,~~~\mbox{and}~~~ \mE\{T|\cH_0, \bh\} \!= \! \ba^T |\bH| \bpf.
\end{equation}
Next, we derive $\Delta$, $\Delta'$ for $\Var\{T|\cH_1 , \bh\}$. Since $T - \mE\{T|\bou, \bh\}=\bone^T \bow$, we find $\Delta \!= \! \mE\{\bone^T \bow \bow^T \bone\} = M\frac{\sigma_{n}^2}{2}$. Also, because $\mE\{T|\bou, \bh\} - \mE\{T|\cH_1, \bh\} \!=\! \ba^T |\bH| (\bou-\bpd) $, we have $\Delta' \!=\!  \ba^T |\bH|(\bou- \bpd)(\bou- \bpd)^T |\bH| \ba $. Substituting $\bar{\Delta} =\Delta+\Delta'$ into (\ref{var0-csi}) and using the facts $\sum_{\bou} \mP(\bou|\cH_1)=1$, $\sum_{\bou} (\bou- \bpd)(\bou- \bpd)^T \mP(\bou|\cH_1)=\mE\{\bou\bou^T|\cH_1\}- \bpd \bpd^T$, we reach to
\begin{equation}\label{final-var-case-A}
\Var\{T|\cH_1,\bh\}
= M\frac{\sigma_{n}^2}{2} +  \ba^T |\bH|(\bar{\bm{P}}_{\bm{d}} - \bpd\bpd^T)|\bH|\ba,
\end{equation}
where $\bar{\bm{P}}_{\bm{d}} \!= \! \mE\{\bou\bou^T|\cH_1\}$ is a square matrix with diagonal entries $[\bar{\bm{P}}_{\bm{d}}]_{ii}\!=\!p_{d_i}$ and off-diagonal entries $[\bar{\bm{P}}_{\bm{d}}]_{ij}\!=\!\mP(u_i\!=\!1, u_j\!=\!1|\cH_1)$ for $i,j \!=\! 1,...,M, i\neq j$.
Note that the correlation among sensors' observations affects the off-diagonal entries of $\bar{\bm{P}}_{\bm{d}}$, i.e., for independent observations $[\bar{\bm{P}}_{\bm{d}}]_{ij}\!=\!p_{d_i}p_{d_j}$ for all $i \neq j$ and equivalently
%for fully correlated observations $p_{d_i}\!=\! p_d$ for all $i$  and thus $[\bar{\bm{P}}_{\bm{d}}]_{ij}\!=\! p_d$ for all $i,j$. Equivalently
%
\begin{equation}\label{matrix-P-bar}
\bar{\bm{P}}_{\bm{d}}\!=\!\DIAG\{\bpd\}(\bI-\DIAG\{\bpd\})+\bpd\bpd^T.
\end{equation}
%
%
%\begin{equation}\label{matrix-P-bar}
%\mbox{independent}~x_k\mbox{s}:\bar{\bm{P}}_{\bm{d}}\!=\!\DIAG\{\bpd\}(\bI-\DIAG\{\bpd\})+\bpd\bpd^T,~\mbox{fully correlated}~x_k\mbox{s}: \bar{\bm{P}}_{\bm{d}} = p_d \bone\bone^T.
%\end{equation}
%
Substituting (\ref{final-mean-case-A}), (\ref{final-var-case-A}) into  \eqref{DF-conditional} we have
\begin{align}\label{DF-coh-PAC}
%\boxed{
&\MDC (\ba) = \frac{\ba^T \bb \bb^T \ba}{\ba^T \bK \ba + c}
\end{align}
\begin{align}
&\mbox{where}~\nonumber\\
&\bb = |\bH|(\bpd -\bpf),~ c = M\frac{\sigma_{n}^2}{2},~ \bK = |\bH|(\bar{\bm{P}}_{\bm{d}} - \bpd\bpd^T)|\bH \nonumber
\end{align}
%
%=============coherent MAC==========================================
\subsection{\textbf{MAC}} Considering the signal model in (\ref{y-k}) and (\ref{fusion-statistic}), we have $\Rel(y) \!=\!  \sum_{k=1}^M  a_k |h_k|u_k + w$ where $ w \! \sim \! {\cal N}(0,\frac{\sigma_{n}^2}{2})$. We write $T \!= \!\ba^T |\bH| \bou + w$. Therefore
%given $\bou$ and $\bh$, we find
$\mE\{T |\bou, \bh\}\!=\! \ba^T |\bH| \bou$.
% and variance $\Var\{ T |\bou, \bh\}\!=\! \frac{\sigma_{n}^2}{2}$.
Substituting $\mE\{T |\bou, \bh\}$ into \eqref{mean-csi}  and applying similar facts as stated above, we find
\begin{equation}\label{final-mean-case-A-MAC}
\mE\{T|\cH_1, \bh\} \!=\! \ba^T |\bH| \bpd,~~~\mbox{and}~~~ \mE\{T|\cH_0, \bh\} \!= \! \ba^T |\bH| \bpf.
\end{equation}
Next, we find $\Delta$ and $\Delta'$. Since $T - \mE\{T|\bou, \bh\}=w$, we find $\Delta \!= \! \mE\{ w^2\} = \frac{\sigma_{n}^2}{2}$. Also, since $\mE\{T|\bou, \bh\} - \mE\{T|\cH_1, \bh\} \!=\! \ba^T |\bH| (\bou-\bpd), $ we have $\Delta' \!=\!  \ba^T |\bH|(\bou- \bpd)(\bou- \bpd)^T |\bH| \ba $. Substituting $\bar{\Delta} =\Delta+\Delta'$ into (\ref{var0-csi}) and using similar facts as stated above we reach
\begin{equation}\label{final-var-case-A-MAC}
\Var\{T|\cH_1,\bh\} = \frac{\sigma_n^2}{2} + \ba^T |\bH|( \bar{\bm{P}}_{\bm{d}}- \bpd\bpd^T)|\bH|\ba.
\end{equation}
Substituting (\ref{final-mean-case-A-MAC}), (\ref{final-var-case-A-MAC}) into  \eqref{DF-conditional} we have
\begin{align}\label{DF-coh-MAC}
%\boxed{
	&\MDC (\ba) = \frac{\ba^T \bb \bb^T \ba}{\ba^T \bK \ba + c}
\end{align}
\begin{align}
	&\mbox{where}~\nonumber\\
	&\bb = |\bH|(\bpd -\bpf),~ c =\frac{\sigma_{n}^2}{2},~ \bK = |\bH|(\bar{\bm{P}}_{\bm{d}} - \bpd\bpd^T)|\bH| \nonumber
\end{align}
Regarding the results in (\ref{DF-coh-PAC}) and (\ref{DF-coh-MAC}), a remark follows.

{\bf Remark}: For both PAC and MAC, the MDC takes the following form
\begin{equation}\label{summary}
  \MDC (\ba) = \frac{\ba^T \bb \bb^T \ba}{\ba^T \bK \ba + c}.
\end{equation}
Vector $\bb$ and matrix $\bK$ are identical for PAC and MAC, whereas scalar $c$, which captures the effect of the channel noises, is $M$ times larger in PAC. Note that $\bb$ and $\bK$ depend on the channel amplitudes $|\bH|$ and the local performance indices. Furthermore, $\bK$ depends on the spatial correlation among sensors' observations.
%========================================================
%\vspace{-0.4cm}
\section{MDC-Based Transmit Power Allocation}\label{power-alloc}
%\vspace{-0.2cm}
Recall  ${\cal P}_{k} \!= \! {\cal P}_{t_k} \theta_k$ where ${\cal P}_{t_k} $  is transmit power of sensor $k$ and $\theta_k$ captures the pathloss effect. Since $a_k\!=\!\sqrt{{\cal P}_{k}}$, we define $a_{t_k}\!=\!\sqrt{{\cal P}_{t_k}} \!= \! \frac{a_k}{\sqrt{\theta_k}}$.
Let $\ba_t\!=\![a_{t_1},...,a_{t_M}]^T$, $\bcP_t\!=\![\cP_{t_1},...,\cP_{t_M}]^T$, and $\sqrt{\bTh}$ be the component-wise square  root of
$\bTh \! = \!\DIAG\{[\theta_1,...,\theta_M]^T\}$.
We can rewrite (\ref{summary}) explicitly in terms of vector $\ba_t$ as
\begin{equation}\label{summary-with-theta}
\MDC(\ba_t) = \frac{\ba_t^T \bb_t \bb_t^T \ba_t}{\ba_t^T \bK_t \ba_t + c},
\end{equation}
where $\bb_t \!= \!\sqrt{\bTh} \bb$ and $\bK_t \!= \!\sqrt{\bTh} \bK \sqrt{\bTh}$. In this section, we maximize the MDC in (\ref{summary-with-theta}), with respect to $\ba_t$, subject to different sets of power constraints specified in Section \ref{system-model}: ($A$) TPC, where $ \ba_t^T \ba_t \! \leq \!{\cal P}_{tot}$; ($B$) TIPC, where  $ \ba_t^T \ba_t \! \leq \!{\cal P}_{tot}$ and $\bzero\preceq \ba_{t} \preceq \sqrt{{\bcP_0}}$. We define vector ${\bcP_0}=[{\cal P}_{0_1}, ..., {\cal P}_{0_M}]^T$ and $\sqrt{{\bcP_0}}$ is the component-wise square  root of ${\bcP_0}$; ($C$) IPC, where $\bzero\preceq \ba_{t} \preceq \sqrt{{\bcP_0}}$. Sections \ref{MDC-max-1},  \ref{MDC-max-2}, \ref{MDC-max-3} discuss the analytical solutions for MDC-based power allocations under these different sets of power constraints.
%
%==============================================================
\vspace{-0.4cm}
\subsection{Maximizing MDC in \eqref{summary-with-theta} under TPC}\label{MDC-max-1}
\vspace{-0.1cm}
The MDC-based transmit power allocation under TPC is the solution to the following problem
\begin{align*}
\begin{array}{cc}
  \begin{array}{cc}
  \max\limits_{\ba_t}. & ~~\frac{\ba_t^T \bb_t \bb_t^T \ba_t}{\ba_t^T \bK_t \ba_t + c}~~~~(\cO_1) \\
  \mbox{s.t.} &  ~~\ba_t^T\ba_t \leq {\cal P}_{tot} \\
   & ~~\ba_t\succeq \bzero
   \end{array}
 \end{array}
\end{align*}
We start with Lemma \ref{DF-ratio} which states that the solution to $(\cO_1)$ satisfies TPC at equality.
%\vspace{-0.3cm}
\begin{lem}\label{DF-ratio}
The maximum values of MDC in \eqref{summary-with-theta} are achieved when the inequality constraint $ \ba_t^T \ba_t \! \leq \!{\cal P}_{tot}$  turns into equality constraint.
\end{lem}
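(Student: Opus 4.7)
The plan is to prove the lemma by a simple scaling/contradiction argument that exploits the fact that the additive constant $c>0$ (the noise variance term) in the denominator of \eqref{summary-with-theta} breaks the scale-invariance of the ratio in favor of large $\ba_t$.

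First I would note the basic homogeneity structure of $\MDC(\ba_t)$. The numerator $\ba_t^T \bb_t \bb_t^T \ba_t = (\bb_t^T\ba_t)^2$ is quadratic in $\ba_t$, and so is the quadratic form $\ba_t^T \bK_t \ba_t$ in the denominator; the only non-homogeneous piece is the scalar $c>0$. Hence, for any scalar $\alpha>0$,
\begin{equation*}
\MDC(\alpha \ba_t) \;=\; \frac{\alpha^2\, \ba_t^T \bb_t \bb_t^T \ba_t}{\alpha^2\, \ba_t^T \bK_t \ba_t + c} \;=\; \frac{\ba_t^T \bb_t \bb_t^T \ba_t}{\ba_t^T \bK_t \ba_t + c/\alpha^2}.
\end{equation*}
Since $\bK_t=\sqrt{\bTh}\,|\bH|(\bPd-\bpd\bpd^T)|\bH|\sqrt{\bTh}\succeq 0$ (it is the channel-weighted covariance of $\bou$ under $\cH_1$), the denominator is strictly positive; and it is strictly decreasing in $\alpha$ whenever $c>0$, while the numerator stays fixed. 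Therefore $\MDC(\alpha\ba_t)$ is a strictly increasing function of $\alpha>0$ whenever $\bb_t^T \ba_t \neq 0$.

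Next I would argue by contradiction. Suppose $\ba_t^\star$ is a maximizer of $(\cO_1)$ with $\ba_t^{\star T}\ba_t^\star < \cP_{tot}$. The trivial point $\ba_t^\star=\bzero$ gives $\MDC=0$ and cannot be optimal (any $\ba_t\succeq\bzero$ aligned with $\bb_t$ and feasible yields a strictly positive MDC, using $\bpd\succ\bpf$ componentwise), so we may assume $\ba_t^{\star T}\ba_t^\star>0$ and $\bb_t^T\ba_t^\star>0$. Choose $\alpha^\star = \sqrt{\cP_{tot}/(\ba_t^{\star T}\ba_t^\star)}>1$. Then $\alpha^\star \ba_t^\star$ still satisfies $\alpha^\star\ba_t^\star\succeq\bzero$ and saturates the TPC with equality. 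By the monotonicity above, $\MDC(\alpha^\star \ba_t^\star) > \MDC(\ba_t^\star)$, contradicting optimality of $\ba_t^\star$. Hence every maximizer must satisfy $\ba_t^{T}\ba_t = \cP_{tot}$.

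I do not anticipate any real obstacle: the argument is essentially one line once the $\alpha^2/\alpha^2$ rewriting is in place. The only points requiring mild care are (i) ruling out the degenerate case $\ba_t^\star=\bzero$, and (ii) confirming positivity/definiteness of the relevant quantities ($c>0$, $\bK_t\succeq 0$, and existence of a feasible $\ba_t$ with $\bb_t^T\ba_t>0$) so that the monotonicity in $\alpha$ is strict; both are immediate from the model assumptions stated earlier in the paper.
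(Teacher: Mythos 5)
Your proposal is correct and is essentially the paper's own argument: the paper likewise assumes a maximizer $\ba_{t1}$ with $\ba_{t1}^T\ba_{t1}<{\cal P}_{tot}$, rescales it to $\ba_{t2}=\ba_{t1}\sqrt{{\cal P}_{tot}}/\|\ba_{t1}\|$, and observes that the quadratic numerator and the quadratic part of the denominator scale identically while the constant $c$ is effectively shrunk, giving $\MDC(\ba_{t2})>\MDC(\ba_{t1})$ and a contradiction. Your added care in excluding $\ba_t=\bzero$ and in noting that strictness needs $\bb_t^T\ba_t\neq 0$ is a small refinement the paper leaves implicit, not a different route.
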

%
%\vspace{-0.3cm}
\begin{proof}
Suppose $\ba_{t1}$ maximizes MDC and $\ba_{t1}^T \ba_{t1} \!<\!{\cal P}_{tot}$. Define $\ba_{t2} \!=\! \frac{\ba_{t1} \sqrt{{\cal P}_{tot}}}{||\ba_{t1}||}$, which satisfies $\ba_{t2}^T \ba_{t2} \!=\! {\cal P}_{tot}$. We have
$\MDC(\ba_{t2}) = \frac{\ba_{t1}^T \bb_t \bb_t^T \ba_{t1}}{\ba_{t1}^T \bK_t \ba_{t1} + c(\frac{\ba_{t1}^T\ba_{t1}}{{\cal P}_{tot}} )}>\frac{\ba_{t1}^T \bb_t \bb_t^T \ba_{t1}}{\ba_{t1}^T \bK_t \ba_{t1} + c } =\\ \MDC(\ba_{t1}),$
which contradicts the optimality assumption of $\ba_{t1}$ i.e., the $\ba_{t}$ that maximizes MDC, must satisfy $\ba_{t}^T \ba_{t} \!=\! {\cal P}_{tot}$.
\end{proof}
%
%================================
%
When the inequality constraint in TPC is turned into equality constraint, we can rewrite MDC in \eqref{summary-with-theta} as
%$\MDC(\ba_t) \!=\! \frac{\ba_t^T \bb_t \bb_t^T \ba_t}{\ba_t^T \bK_t \ba_t +c \frac{\ba_t^T\ba_t}{{\cal P}_{tot}} } \!= \!\frac{\ba_t^T \bb_t \bb_t^T \ba_t}{\ba_t^T \bQ_a \ba_t}$,
%which results in
%
\begin{align}\label{equations-for-Q}
\MDC(\ba_t) \!= \! \frac{\ba_t^T \bb_t \bb_t^T \ba_t}{\ba_t^T \bQ_a \ba_t},~~
\mbox{where}~~ \bQ_a \!= \! \bK_t + \frac{c}{{\cal P}_{tot}}\bI.
\end{align}
%\begin{eqnarray}\label{equations-for-Q}
%\!\!\!\!\!\!\!\! \MDC(\ba_t) \!\! & \!\! =  \!\! & \!\! \frac{\ba_t^T \bb_t \bb_t^T \ba_t}{\ba_t^T \bQ_a \ba_t},~\mbox{where}~\bQ_a \!= \! \bK_t + \frac{c}{{\cal P}_{tot}}\bI, \nonumber \\
%\MDC(\bcP_t) \!\! & \!\! =  \!\! & \!\! \frac{\bcP_t^T \bb_t \bb_t^T \bcP_t}{\bcP_t^T \bQ_p \bcP_t},~\mbox{where}~\bQ_p \!= \! \bK_t + \frac{1}{2{\cal P}_{tot}} (\bd_t \bone^T + \bone \bd_t^T)+ \frac{c}{{{\cal P}_{tot}^2}}\bone\bone^T.
%\end{eqnarray}
%
Hence, $(\cO_1)$ reduces to
\begin{align*}
\begin{array}{cc}
  \begin{array}{cc}
  \max\limits_{\ba_t}. & ~~\frac{\ba_t^T \bb_t \bb_t^T \ba_t}{\ba_t^T \bQ_a \ba_t}~~~~(\cO'_1) \\
  \mbox{s.t.} &  ~~\ba_t^T\ba_t = {\cal P}_{tot} \\
   & ~~\ba_t\succeq \bzero
   \end{array}
 \end{array}
\end{align*}
To analytically solve $(\cO'_1)$, we use the result of Lemma \ref{max-dir} given below.
%\vspace{-0.5cm}
\begin{lem}\label{max-dir}
For $\bQ  \! \succ \! \bzero$ the function $f(\bx) \!= \! \frac{\bx^T \bb_t \bb_t^T \bx}{\bx^T \bQ \bx}$ is maximized at $\bx^* \!=\! \bQ^{-1} \bb_t$ and its non-zero scales.
\end{lem}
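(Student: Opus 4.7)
The plan is to recognize $f(\bx)$ as a generalized Rayleigh quotient and dispatch it via a Cauchy--Schwarz argument after whitening by $\bQ^{1/2}$. First I would note two structural facts: (i) since $\bQ\succ\bzero$ it admits a unique positive-definite square root $\bQ^{1/2}$, and (ii) $f$ is invariant under nonzero scalar multiplication of $\bx$, which already foreshadows why the maximizer can only be determined up to scale.

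The key algebraic manipulation would be to write the numerator as an inner product in the whitened coordinates:
\begin{equation*}
\bx^T \bb_t \;=\; (\bQ^{1/2}\bx)^T(\bQ^{-1/2}\bb_t),
\end{equation*}
so that Cauchy--Schwarz yields
\begin{equation*}
(\bx^T\bb_t)^2 \;\le\; \bigl\|\bQ^{1/2}\bx\bigr\|^2\,\bigl\|\bQ^{-1/2}\bb_t\bigr\|^2 \;=\; (\bx^T\bQ\bx)(\bb_t^T\bQ^{-1}\bb_t).
\end{equation*}
Dividing by $\bx^T\bQ\bx > 0$ (which is strictly positive for $\bx\neq\bzero$ by positive-definiteness of $\bQ$) gives the uniform upper bound $f(\bx)\le \bb_t^T\bQ^{-1}\bb_t$. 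Finally I would invoke the standard equality condition in Cauchy--Schwarz: equality holds iff the two vectors $\bQ^{1/2}\bx$ and $\bQ^{-1/2}\bb_t$ are collinear, i.e., $\bQ^{1/2}\bx = \alpha\,\bQ^{-1/2}\bb_t$ for some scalar $\alpha$, which upon left-multiplying by $\bQ^{-1/2}$ rearranges to $\bx = \alpha\,\bQ^{-1}\bb_t$. A direct substitution then confirms $f(\alpha\,\bQ^{-1}\bb_t) = \bb_t^T\bQ^{-1}\bb_t$ for every $\alpha\neq 0$, matching the upper bound.

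I do not anticipate a genuine obstacle here; the only small care point is handling the degenerate case $\bb_t=\bzero$ (where $f\equiv 0$ and every $\bx$ is trivially a maximizer, consistent with $\bx^*=\bQ^{-1}\bb_t=\bzero$), and emphasizing that the maximizer is unique only up to nonzero scaling, which is exactly what the lemma asserts. An alternative route via the substitution $\by=\bQ^{1/2}\bx$ reducing $f$ to the Rayleigh quotient of the rank-one matrix $\bQ^{-1/2}\bb_t\bb_t^T\bQ^{-1/2}$ would also work, but the Cauchy--Schwarz route is shorter and avoids any eigen-decomposition machinery.
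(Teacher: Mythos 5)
Your proof is correct and follows essentially the same route as the paper: both whiten the quadratic form and bound the resulting rank-one Rayleigh quotient by $\bb_t^T\bQ^{-1}\bb_t$, the paper via an eigendecomposition of $\bQ$ together with the Rayleigh--Ritz inequality, you via $\bQ^{1/2}$ together with Cauchy--Schwarz, and these coincide for a rank-one matrix. Your explicit handling of the equality condition and of the degenerate case $\bb_t=\bzero$ is, if anything, slightly more careful than the paper's.
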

%\vspace{-0.5cm}
\begin{proof} See Appendix \ref{appendix-proof-lemma2}.
\end{proof}
%
%=======================
To be able to use Lemma \ref{max-dir} to solve $(\cO'_1)$, we need to examine whether symmetric matrix $\bQ_a$ is positive definite. Note $\bar{\bm{P}}_{\bm{d}} - \bpd\bpd^T \succ \bzero$ since it is a covariance matrix. Thus $\bK,\bK_t \succ \bzero$. Also $\frac{c}{{\cal P}_{tot}}\bI  \succ \bzero$. Therefore $\bQ_a \succ \bzero$.
To solve $(\cO'_1)$, we find $\hat{\bq} = \frac{\bq}{||\bq||}$ where $\bq = \bQ_a^{-1}\bb_t$. If $\hat{\bq} \succeq \bzero $ we let $\ba_{t}^* = \hat{\bq}\sqrt{{\cal P}_{tot}}$ and if $-\hat{\bq} \succeq \bzero$ we let $\ba_{t}^* = -\hat{\bq}\sqrt{{\cal P}_{tot}}$. But if all the entries of $\hat{\bq}$ do not have the same sign, we resort to numerical solutions. In particular, we turn the problem $(\cO'_1)$ into a convex problem and solve it numerically. We discuss these numerical solutions in Section \ref{convex-format}.
%=====================================

%\vspace{-0.1cm}
$\bullet$ {\it Analytical Solution to $(\cO'_1)$ with Independent Observations}: $\bar{\bm{P}}_{\bm{d}}$ is given in (\ref{matrix-P-bar}) and $\bK$ simplifies to $\bK \!=\! |\bH|\DIAG\{\bpd\} (\bI - \DIAG\{ \bpd\})|\bH|$. Let $g_k\!=\!\sqrt{\theta_k}|h_k|$. It is easy to verify $\bQ_a$ is a diagonal matrix with diagonal entries $[\bQ_a]_{kk}\!=\!  p_{d_k} ( 1- p_{d_k})g_k^2 + \frac{c}{{\cal P}_{tot}} $. Let $q_k$ be the $k$th entry of $\bq \!= \! \bQ_a^{-1} \bb_t$. Then
$q_k = \frac{ (p_{d_k} - p_{f_k}) g_k}{p_{d_k} ( 1 - p_{d_k}) g_k^2 + \frac{c}{{\cal P}_{tot}} },~~k=1,...,M$,
which is positive for $p_{d_k} \! > \! p_{f_k}$. We observe $q_k \! \approx \! \frac{(p_{d_k} - p_{f_k})}{p_{d_k} (1-p_{d_k}) g_k}$
for large $\frac{{\cal P}_{tot}}{c}$,  whereas $q_k \! \approx \! \frac{{\cal P}_{tot}}{c}(p_{d_k} - p_{f_k}) g_k$ for small $\frac{{\cal P}_{tot}}{c}$. For homogeneous sensors where
$p_{f_k}\!=\!p_f$ and $p_{d_k}\!=\!p_d$, we find the MDC-based power allocation strategy as $q_k \! \propto \! \frac{1}{g_k}$ for large $\frac{{\cal P}_{tot}}{c}$ (inverse water filling) and $q_k \! \propto \! g_k$ for small $\frac{{\cal P}_{tot}}{c}$ (water filling).
\vspace{-0.5cm}
\subsection{Maximizing MDC in \eqref{summary-with-theta} under TIPC}\label{MDC-max-2}
\vspace{-0.1cm}
The MDC-based transmit power allocation is the solution to the following problem
\begin{align*}
\begin{array}{cc}
 \begin{array}{cc}
 \max\limits_{\ba_t}. & ~~\frac{\ba_t^T \bb_t \bb_t^T \ba_t}{\ba_t^T \bK_t \ba_t+c}~~~~(\cO_2) \\
  \mbox{s.t.} &  ~~\ba_t^T\ba_t \leq {\cal P}_{tot} \\
   & ~~\bzero \preceq \ba_t \preceq \sqrt{\bcP_0}
    \end{array}
\end{array}
\end{align*}
While analytical solution to $(\cO_2)$ remains elusive, we find sub-optimal power allocation via solving the following optimization problem
\begin{align*}
\begin{array}{cc}
 \begin{array}{cc}
 \max\limits_{\ba_t}. & ~~\frac{\ba_t^T \bb_t \bb_t^T \ba_t}{\ba_t^T \bQ_a \ba_t}~~~~(\cO'_2) \\
  \mbox{s.t.} &  ~~\ba_t^T\ba_t = {\cal P}_{tot} \\
   & ~~\bzero \preceq \ba_t \preceq \sqrt{\bcP_0}
    \end{array}
\end{array}
\end{align*}
where $\bQ_a$ is given in (\ref{equations-for-Q}). Note that $(\cO'_2)$ is identical to $(\cO_2)$, except that  the inequality in TPC is turned into equality, i.e., the feasible set of $(\cO'_2)$ is a subset of the feasible set of $(\cO_2)$ and the objective function of $(\cO_2)$ is rewritten accordingly.
Indeed, this sub-optimal solution is an accurate solution when $\kappa\!=\!\frac{{\cal P}_{tot}\bg^T\bg}{c} \! \ll \! 1$ for $(\cO_2)$, as we show in the following.
Examining $\bK$ and $\bK_t$ when $\kappa \! \ll \! 1$, we can establish the following inequalities
\begin{align*}
&\ba_t^T\bK_t\ba_t \overset{(a)}{\leq} \ba_t^T\sqrt{\bTh}|\bH|\bone \bone^T|\bH|\sqrt{\bTh}\ba_t=\ba_t^T \bg \bg^T \ba_t \overset{(b)}{\leq}\\
&(\ba_t^T\ba_t) (\bg^T\bg) \overset{(c)}{\leq} {\cal P}_{tot} \bg^T\bg \overset{(d)}{\ll } c
\end{align*}
where $(a)$ is obtained noting that all entries of $\bar{\bm{P}}_{\bm{d}} - \bpd\bpd^T$ are less that $1$, $(b)$ is found using Cauchy-Schwarz inequality, $(c)$ comes from the inequality constraint in ($\cO_2$), and $(d)$ is due to $\kappa \! \ll \! 1$. This implies that when $\kappa \! \ll \! 1$, $(\cO_2)$ can be approximated as ($\cO_2^l$) in (\ref{O3-O4-low}).\\
%\vspace{-0.3cm}
%
\begin{align}\label{O3-O4-low}
 \begin{array}{cc}
 \min\limits_{\ba_t}. & ~~\frac{c}{\ba_t^T \bb_t \bb_t^T \ba_t}~~~~(\cO_2^l) \\
  \mbox{s.t.} &  ~~\ba_t^T\ba_t \leq {\cal P}_{tot} \\
   & ~~\bzero \preceq \ba_t \preceq \sqrt{\bcP_0}
    \end{array}
\end{align}
In Appendix \ref{appendix-proof-O3-TPC}, we show that the solution to $(\cO_2^l)$ satisfies the equality $\ba_t^T\ba_t \!=\! {\cal P}_{tot}$. This confirms that the solution to $(\cO'_2)$ (sub-optimal solution) is an accurate substitute for the solution to $(\cO_2)$ under the condition $\kappa \! \ll \! 1$.
%We proceed to solve $(\cO_2')$, $(\cO_4')$. %To solve these problems, we use the result of Lemma 3 given below, which is built on Lemma 2.
%%\vspace{-0.6cm}
%\begin{lem}\label{single-max}
%Lemma 2 states that $f(\bx) \!= \! \frac{\bx^T \bb_t \bb_t^T \bx}{\bx^T \bQ \bx}$ has a maximum at $\hat{\bx}=\frac{\bx^*}{||\bx^*||}$ when $\bQ \succ \bzero$. If $\hat{\bx} \succeq \bzero$ then $f(\bx)$ does not have any other local maximum or minimum in the set $\{\bx: \bx \succeq \bzero\}$.
%\end{lem}
%%
%%\vspace{-0.6cm}
%\begin{proof}
%See Appendix \ref{appendix-proof-lemma3}.
%\end{proof}
%%
To solve $(\cO'_2)$, we first ignore the box constraints of IPC and consider only TPC at equality. The problem solving strategy is similar to solving $(\cO'_1)$ in Section \ref{MDC-max-1}. In particular, to solve $(\cO_2')$, we find $\hat{\bq} \!= \!\frac{\bq}{||\bq||}$ where $\bq \! = \! \bQ_a^{-1}\bb_t$. If $\hat{\bq} \! \succeq \! \bzero $ we let $\ba_{t1}^* \! = \! \hat{\bq}\sqrt{{\cal P}_{tot}}$ and if $-\hat{\bq} \! \succeq \! \bzero$ we let $\ba_{t1}^* \! = \! -\hat{\bq}\sqrt{{\cal P}_{tot}}$.
If $\ba_{t1}^* $ satisfies the box constraint $\bzero \!  \preceq \! \ba_t \! \preceq \! \sqrt{\bcP_0}$, it is the solution to $(\cO'_2)$.
However, if $\ba_{t1}^* $ does not satisfy its corresponding box constraint, following Appendix \ref{appendix-proof-lemma2}, we can easily show that $f(\bx) \!= \! \frac{\bx^T \bb_t \bb_t^T \bx}{\bx^T \bQ \bx}$ does not have local maximum or minimum in the set $\{\bx: \bx \succeq \bzero\}$. This means that, in this case, the closest feasible point to $\ba_{t1}^*$ is the solution to $(\cO'_2)$. That is, the solution to  $(\cO'_2)$ when $ \bzero \! \preceq \! \ba_{t1}^* \! \npreceq \! \sqrt{\bcP_0} $ is the solution to $(\cO''_2)$ given below
\begin{align*}
\begin{array}{cc}
  \begin{array}{cc}
 \min\limits_{\ba_t}. & ~~|\ba_t -\ba_{t1}^*|^2 ~~~~(\cO''_2) \\
  \mbox{s.t.} &  ~~\ba_t^T\ba_t = {\cal P}_{tot} \\
   & ~~\bzero \preceq \ba_t \preceq \sqrt{\bcP_0}
   \end{array}
 \end{array}
\end{align*}
%
%The argument follows. In Appendix \ref{appendix-proof-lemma2}, we showed that $f(\bx) \!= \! \frac{\bx^T \bb_t \bb_t^T \bx}{\bx^T \bQ \bx} \! \leq \! \lambda_{max}(\bar{\bb_t}\bar{\bb_t}^T)$, where  $\bar{\bb_t}\bar{\bb_t}^T$ is a rank one matrix with the eigenvalue $|\bar{\bb_t}|^2$ and the eigenvector $\bar{\bb_t}$. Thus, $f(\bx)$ achieves its maximum at $\bx^* \!= \! \bQ^{-1}\bb_t $ and does not have local maximum or minimum in the set $\{\bx: \bx \succeq \bzero\}$.
%
Our analytical solution to $(\cO''_2)$ is presented in the appendix \ref{coh-case2}. Note that $(\cO''_2)$ is not convex. In Appendix \ref{coh-case2} we show that, despite this fact, the solution to Karush-Kuhn-Tucker (KKT) conditions for $(\cO''_2)$ is unique.
%==========================================================
%\vspace{-0.2cm}
\subsection{Maximizing MDC in \eqref{summary-with-theta} under IPC}\label{MDC-max-3}
%\vspace{-0.3cm}
The MDC-based transmit power allocation is the solution to the following optimization problem
\begin{align*}
  \begin{array}{cc}
  \max\limits_{\ba_t}. & ~~\frac{\ba_t^T \bb_t \bb_t^T \ba_t}{\ba_t^T \bK_t \ba_t + c}~~~~(\cO_3) \\
  \mbox{s.t.}  & ~~\bzero \preceq \ba_t \preceq \sqrt{\bcP_0}
  \end{array}
\end{align*}
Similar to Section \ref{MDC-max-2}, we show below that, when $\xi\!=\!\frac{\bone^T \bcP_0 \bg^T\bg}{c} \! \ll \! 1$, $(\cO_3)$ can be approximated as $(\cO_3^l)$ in (\ref{O5-O6-low}).
Examining $\bK$ and $\bK_t$ when $\xi \! \ll \! 1$, we can establish the following inequalities
\begin{align*}
&\ba_t^T\bK_t\ba_t \overset{(a)}{\leq} \ba_t^T\sqrt{\bTh}|\bH|\bone \bone^T|\bH|\sqrt{\bTh}\ba_t = \ba_t^T\bg \bg^T\ba_t
\overset{(b)}{\leq} \\
&(\ba_t^T\ba)(\bg^T\bg)  \overset{(c)}{\leq}\bone^T \bcP_0 \bg^T\bg  \overset{(d)}{\ll } c,
\end{align*}
where $(a)$ is because all entries of $\bar{\bm{P}}_{\bm{d}} - \bpd\bpd^T$ are less that $1$, $(b)$ is found using Cauchy-Schwarz inequality, $(c)$ comes from the inequality in IPC, and $(d)$ is due to $\xi \! \ll \! 1$. This implies that when $\xi \! \ll \! 1$, $(\cO_3)$ can be approximated with ($\cO_3^l$) in (\ref{O5-O6-low}).
\begin{align}\label{O5-O6-low}
 \begin{array}{cc}
 \min\limits_{\ba_t}. & ~~\frac{c}{\ba_t^T \bb_t \bb_t^T \ba_t}~~~~(\cO_3^l) \\
  \mbox{s.t.} &  ~~\bzero \preceq \ba_t \preceq \sqrt{\bcP_0}
    \end{array}
\end{align}
In Appendix \ref{appendix-proof-O5-IPC} we show that the solution to $(\cO_3^l)$ is $\ba_t=\sqrt{\bcP_0}$.
%Below, we provide the analytical solution to $(\cO_3)$ for the special case of independent observations.

%=====================================
%\vspace{-0.2cm}
$\bullet$ {\it Analytical Solution to $(\cO_3)$  with Independent Observations}: Suppose $\cP_{0_k} \! = \! \cP_0,~k\! =\!1,...,M$. We showed in Section \ref{MDC-max-1} that $\bK,\bK_t \succ \bzero$. With independent observations, these matrices become diagonal.
To solve $(\cO_3)$, we minimize $\frac{1}{\MDC(\ba_t)}$ under IPC. Assume $\bpsi$, $\bphi$ are the Lagrange multipliers of the constraints $\ba_t \! \preceq \! \sqrt{\bcP_0}$ and $ \ba_t  \! \succeq \!\bzero $, respectively. Then KKT conditions are
\begin{align}\label{eta-definition}
& \frac{2}{(\bb_t^T \ba_t)^2} ([\bK_t]_{kk} a_{t_k} - b_{t_k} \eta)  +  \psi_k - \phi_k \!=\! 0, ~~~k=1,...,M,\\
&\mbox{where}~ \eta \!=\! \frac{\ba_t^T \bK_t \ba_t + c}{\bb_t^T \ba_t} \nonumber\\
&\psi_k (a_{t_k}-\sqrt{\cP_{0}})\!=\!0,~~\psi_k \geq 0,~~a_{t_k} \leq \sqrt{\cP_{0}},\nonumber\\
&\mbox{and}~~
\phi_k a_{t_k}=0,~~\psi_k\geq 0,~~ a_{t_k} \geq 0, \nonumber
\end{align}
Since $\ba_t^T \bK_t \ba_t \! \geq \! 0$, $\bb_t \! \succ \! 0$ and $c\!  > \! 0$, we find $\eta \! > \! 0$. Solving the KKT conditions yields %$a_{t_k} \neq 0$ and
\begin{equation}\label{a-tk}
a_{t_k} = \left\{
\begin{array}{ll}
\frac{b_{t_k} }{ [\bK_{t}]_{kk}  } \eta, & \quad\text{for } \eta \leq \frac{[\bK_{t}]_{kk} }{b_{t_k}}\sqrt{\cP_{0}}\\
\sqrt{\cP_{0}}, & \quad\text{otherwise}.
\end{array}\right.
\end{equation}
In Appendix \ref{uniqueness}, we show that at least one of $a_{t_k}$s in \eqref{a-tk} obtains its maximum $\sqrt{\cP_{0}}$.
Suppose we sort the sensors such that
$\frac{b_{t_{i_1}}}{[\bK_t]_{i_1 i_1}} \! \geq \! ... \! \geq \!\frac{b_{t_{i_M}}}{[\bK_t]_{i_M i_M}}$, i.e., $a_{t_{i_1}} \! \geq \! ... \! \geq \! a_{t_{i_M}}$. Let ${\ba}_t\!=\![a_{t_{i_1}},..., a_{t_{i_m}}, a_{t_{i_{m+1}}},..., a_{t_{i_M}}]^T$ and $a_{t_{i_1}} \!= \! ... \!= \! a_{t_{i_m}} \!= \! \sqrt{\cP_{0}},~1 \leq m \leq M$. Solving ${\ba}_t^T \bK_t {\ba}_t + c - \eta \bb_t^T {\ba}_t \!= \! 0$ for $\eta$, combined with \eqref{a-tk}, we find
%
%\begin{equation}
$\eta_0 = \frac{\cP_{0} \sum_{j=1}^{m} [\bK_t]_{i_j i_j}  + c }{\sqrt{\cP_{0}} \sum_{j=1}^{m}b_{t_{i_j}}}$.
%\end{equation}
%
If $\frac{ [\bK_t]_{i_m i_m}}{b_{t_{i_m}}}\sqrt{\cP_{0}} \leq \eta_0 \leq \frac{[\bK_t]_{i_{m+1} i_{m+1}}}{b_{t_{i_{m + 1 }}}}\sqrt{\cP_{0}}$, then the above assumption is valid, and we substitute $\eta_0$ in \eqref{a-tk} and calculate $a_{t_{i_{m+1}}},..., a_{t_{i_M}}$ and $\MDC$ in $(\cO_3)$. Note that $\eta_0$ depends on $m$.
Otherwise, we increase $m$ by one and repeat the procedure, until we reach $\eta_0$ that lies within the proper interval. In Appendix \ref{uniqueness} we also show that, although $(\cO_3)$ is not convex, the KKT solution in (\ref{a-tk}) is unique.

%==================================================
%\vspace{-0.4cm}
\subsection{Discussion on Maximization of MDC Using Convex Optimization Program}\label{convex-format}
%\vspace{-0.2cm}
Recall that in Section \ref{MDC-max-1} we could not find a closed form solution for $(\cO_1)$ when all the entries of $\hat{\bq}$ do not have the same sign. Also, the analytical solution to $(\cO_2)$, formulated in Section \ref{MDC-max-2}, remains elusive. Hence, we have provided a sub-optimal solution, via solving $(\cO_2')$ that is accurate solution when $\kappa \! \ll \! 1$. Similarly, we have derived a sub-optimal solution to $(\cO_3)$, formulated in Section \ref{MDC-max-3}, that is accurate solution when $\xi \! \ll \! 1$. In this section, we turn $(\cO_1)$,$(\cO_2)$,$(\cO_3)$ into convex optimization problems, in order to solve them numerically using CVX program.

We start with $(\cO_2)$, in which we wish to minimize $\frac{1}{\MDC(\ba_t)} \!= \! \frac{  \ba_t^T \bK_t \ba_t+c}{(\bb_t^T \ba_t)^2}$, under TIPC. Let $\bx_a \!=\! \frac{\ba_t}{\bb_t^T\ba_t}$ and $t_a \!= \! \frac{1}{\bb_t^T\ba_t}$. Therefore $\bb_t^T \bx_a\!=\!1$ and $\ba_{t}\!=\!\frac{\bx_a}{t_a}$. Employing these definitions, $(\cO_2)$ can be rewritten in the following equivalent form
\begin{align}\label{O31-O41}
\begin{array}{cc}
  \begin{array}{cc}
  \min\limits_{\bx_a,t_a}. & ~~ \bx_a^T \bK_t \bx_a + ct_a^2 ~~~~(\cO_{2_1}) \\
  \mbox{s.t.}  &~~\bx_a^T\bx_a \leq {\cal P}_{tot}t_a^2\\
  & ~~\bx_a \preceq t_a \sqrt{\bcP_0}\\
& ~~\bzero \preceq \bx_a \\
  &~~\bb_t^T \bx_a=1
  \end{array}
\end{array}
\end{align}
We can reformulate $(\cO_{2_1})$ as
\begin{align*}\label{O32-O42}
\begin{array}{cc}
\begin{array}{cc}
  \min\limits_{\bz_a}. & ~~ \bz_a^T \bD_a \bz_a  ~~~~(\cO_{2_2}) \\
  \mbox{s.t.}  &~~\bz_a^T \left[
                            \begin{array}{cc}
                               \bI & \bzero \\
                               \bzero^T & -{\cal P}_{tot}\\
                            \end{array}
                            \right]  \bz_a <1\\
  & ~~[\bI , -\sqrt{\bcP_0}] \bz_a  \preceq \bzero\\
&~~[\bb_t^T , 0] \bz_a =1,~~\bz_a \succeq  \bzero
  \end{array}
\end{array}
\end{align*}
where
\begin{equation*}
\bz_a = [\bx_a^T , t_a]^T,
~~\bD_a= \left[
                   \begin{array}{cc}
                    \bK_t & \bzero\\
                    \bzero^T & c\\
                    \end{array}
             \right].
\end{equation*}
Examining $(\cO_{2_2})$, we realize that it is a quadratic programming (QP) convex problem since $\bD_a \! \succeq \! \bzero$ and hence it can be solved using CVX program.
One can take similar steps to turn $(\cO_1)$ and $(\cO_3)$ into a problem whose optimal solution can be found using CVX program. In particular, we formulate $(\cO_{1_1})$,$(\cO_{1_2})$ via deleting the second inequality constraints corresponding to IPC and $(\cO_{3_1})$,$(\cO_{3_2})$ by removing the first inequality constraints corresponding to TPC from $(\cO_{2_1})$,$(\cO_{2_2})$, respectively. Since $\bD_a \succeq \bzero$, $(\cO_{1_2})$ and $(\cO_{3_2})$ are also QP convex problems.
\vspace{-0.3cm}
\section{Numerical Results}\label{num-results}
\vspace{-0.3cm}
In this section, through simulations, we corroborate our analytical results. We study the effect of correlation between sensors' observations on the MDC, the performance improvements achieved by the MDC-based transmit power allocations (we refer to as ``DPA''), and the impact of different sensing and communication channels on DPA. For our simulations, we consider the signal model
${\cal H}_0: x_k\!=\!z_k,~{\cal H}_1: x_k\!=\!s_k+z_k, \mbox{for} ~k=1,...,M,$
where $z_k\sim {\cal N}(0,\sigma^2_0)$ and $s_k \sim {\cal N}(0,\sigma^2_{s_k})$ is a sample of an external Gaussian signal source $s \sim {\cal N}(0,\sigma^2_{s})$. We assume $\sigma^2_{s_k}\!=\!\frac{\sigma^2_s}{d^{\epsilon_s}_{S_k}}$, where $d_{S_k}$ is the distance between sensor $k$ and $s$ and $\epsilon_s$ is the pathloss exponent. We assume $z_k$ and $s_k$ are mutually uncorrelated, however, $s_k$'s are correlated. Let $\bs\!=\![s_1,s_2,...,s_M]^T$ have covariance matrix $K_s\!=\!\mathbb{E}\{\bs\bs^T\}$. We assume $[K_s]_{ij}\!=\!\rho_{ij}\sqrt{\sigma^2_{s_i}\sigma^2_{s_j}}$ where $\rho_{ij}\!=\!\rho^{d_{ij}}$, $0 \!\leq\! \rho\!\leq\!1$ is the correlation at unit distance and depends on the environment and $d_{ij}$ is the distance between sensors $i$ and  $j$ \cite{Hafeez}. Each sensor employs an energy detector that  maximizes $p_{d_k}$, under the constraint $p_{f_k} \!< \!0.1$. Sensors are deployed at equal distances from each other, on the circumference of a circle with diameter $5$m on the $x$-$y$ plane, where the coordinate of its center is $(0,0,0)$.
For sensing part, we assume $M\!=\!8$, $\epsilon_s\!=\!2$, $\sigma^2_s \!=\!5\,\mbox{dBm}$, $\sigma^2_0  \!=\!-70\,\mbox{dBm}$, and for communication part we let $\sigma^2_n \!=\!-70\,\mbox{dBm}$, $G\!= \!-55\,\mbox{dB}$ \cite{Poor_PA},  $\epsilon_c \!=\!2$, and $\cP_{0_1}\!=\! ... \!=\!\cP_{0_M}\!=\!\bar{\cP}$.
%{\red We consider an energy detector at each sensor and maximize $p_{d_k}$ at each sensor under the constraint $p_{f_k} < 0.1$}.
%=============================================================

\underline{\it Performance of DPA when $p_{d_k}$'s and pathloss are identical}:
Suppose the coordinates of signal source $s$ and the FC, respectively, are $(0,0,3\mbox{m})$, and $(0,0,-10\mbox{m})$.
With this configuration, $p_{d_k}\!=\!0.6615, \forall k$ and pathloss are identical. We assume $h_{k}\sim {\cal C}{\cal N}(0,1), \forall k$ and we average over 10,000 number of channel realizations to obtain the results. We explore the MDC enhancements achieved by DPA and compare the MDC values with those of obtained by uniform power allocation (we refer to as ``UPA''), in which sensors transmit at equal powers.

%=======================Fig 0 Optimal vs. MDC under TPC======================================
Fig. \ref{fig:LRT_Linear} compares optimal power allocation (OPA), which finds the sensors' powers that maximize $P_{D_0}$ under the constraint $P_{F_0} < \beta_F$, DPA and UPA, for linear fusion rule and the optimal LRT rule. To find OPA with both linear and the LRT rules and DPA with the LRT rule, we use brute force search to find the power values, and we simplify the network and only consider $s_1$ and $s_5$. We assume $\rho=0.1$, $\beta_F=0.1$  and plot $P_{D_0}$ versus ${\cal P}_{tot}$ under TPC for PAC. Fig. \ref{fig:Linear_2sensor} compares OPA, DPA and UPA, given linear fusion rule. We observe that at low ${\cal P}_{tot}$, they are close to each other but as ${\cal P}_{tot}$ increases, they diverge and DPA outperforms UPA but performs worse than OPA. Fig. \ref{fig:LRT_2sensor} compares OPA, DPA and UPA, given the LRT rule. Similarly, we see that DPA performs between UPA and OPA. We note that at low ${\cal P}_{tot}$, DPA approaches OPA. Also we plot $P_{D_0}$ for the LRT rule, where the transmit power values are obtained from maximizing the MDC of the linear fusion rule (in Fig. 1(b) we refer to it as "Power alloc. by DPA with linear rule"). We observe that, except at low ${\cal P}_{tot}$, this curve is very close to $P_{D_0}$ corresponding to DPA for the LRT rule, implying  that the MDC-based power allocation with linear fusion rule is very close to the MDC-based power allocation with LRT rule.

%=======================Fig 1 coherent under TPC======================================
%
Figs. \ref{fig:PD0-TPC-coh} and \ref{fig:MDC-TPC-coh} show $P_{D_0}$ and maximized MDC versus ${\cal P}_{tot}$, respectively, under TPC for PAC and MAC, $\rho\!=\!0.1, 0.9$, and $\beta_F=0.05$ in Fig. \ref{fig:PD0-TPC-coh}. Comparing Figs. \ref{fig:PD0-TPC-coh} and \ref{fig:MDC-TPC-coh}, we observe that the MDC and $P_{D_0}$ follow similar trends. Hence, to make our computations faster and less complex, in the rest of this section we only calculate the MDC. From Fig. \ref{fig:MDC-TPC-coh}, we note that the MDC increases by increasing ${\cal P}_{tot}$ or by decreasing $\rho$.
Comparing PAC and MAC, we note that MAC outperforms PAC at low ${\cal P}_{tot}$, whereas  PAC converges to MAC at high ${\cal P}_{tot}$. These are due to the facts that, at low ${\cal P}_{tot}$ the effect of communication channel noise characterized by $c$ in the MDC expression of PAC is $M$ times larger than that of MAC (see equations
(\ref{DF-coh-PAC}) and (\ref{DF-coh-MAC})) and thus MAC outperforms PAC. However, at high ${\cal P}_{tot}$ this difference in $c$ values is negligible and hence PAC converges to MAC. We also observe that, at low ${\cal P}_{tot}$ the performance gaps corresponding to DPA and UPA are negligible, despite the fact that sensors experience different communication channel fading.
This is because at low ${\cal P}_{tot}$ the dominant effect of communication channel noise renders the decisions of sensors equally important to the FC, regardless of the channel realizations and the actual (different) decisions.
On the other hand, at high ${\cal P}_{tot}$ the performance gaps corresponding to DPA and UPA are significant. Note that this performance gap in MAC is wider than that of PAC. This is expected, since the larger $c$ value in PAC undermines the differences between sensors and narrows the performance gap between DPA and UPA.
As $\rho$ increases, the chances that sensors make similar decisions increase and therefore the performance gaps between DPA and UPA shrink.

%=============Fig 2======================================================
%This behavior is proved analytically in appendix \ref{TPC-low_power}.
%With the same setup, we plot detection probability at the FC,$P_{D_0}$, versus ${\cal P}_{tot}$ under the constraint that the false alarm probability at the FC, $P_{F_0}$, is less than --- . We observe that $P_{D_0}$ behaves similar to MDC, but the probability values could change depending on the $P_{F_0}$ constraint.
% In case $2$, as case $1$, as we have shown before $\ba_t^T\ba_t={\cal P}_{tot}$ so $\ba^*_t=\bb_t\sqrt {\frac{{\cal P}_{tot}}{\bb_t^T\bb_t}}$ and since ${\cal P}_{tot}$ is small, it is very likely that it satisfies box constraint so $\ba_t=\bb_t\sqrt {\frac{{\cal P}_{tot}}{\bb_t^T\bb_t}}$. So again by substituting this optimal solution we obtain equal MDCs for optimal and uniform power allocations.
%========================================================================
%
%=================Fig 2 coherent under TIPC==================================
Fig. \ref{fig:MDC-TIPC-coh} shows the MDC maximized under TIPC versus ${\cal P}_{tot}$ for PAC and MAC, $\bar{\cP}\!=\!30\mW$, and $\rho\!=\!0.1, 0.9$.
%In addition to the setting in Fig. \ref{fig:MDC-TPC-coh}, the sensors powers are limited to $\bar{\cP}=30\mW$.
Similar to Fig. \ref{fig:MDC-TPC-coh}, the MDC increases by increasing ${\cal P}_{tot}$ or decreasing $\rho$ and MAC outperforms PAC. We also compare the MDC obtained from solving $(\cO_2)$ and $(\cO'_2)$, in which we have the inequality constraint (I) $|\ba_t|^2\!\leq\!{\cal P}_{tot}$ and the equality constraint (E) $|\ba_t|^2\!=\!{\cal P}_{tot}$, respectively.
We observe that at low ${\cal P}_{tot}$, there is no performance gap corresponding to ``DPA with E'' and ``DPA with I'', whereas at high ${\cal P}_{tot}$, the performance of ``DPA with E'' degrades from that of ``DPA with I''. This performance degradation in MAC is due to the increasing interference of sensors' decisions at the FC when sensors are assigned higher transmit power. At very high ${\cal P}_{tot}$ the performance of ``DPA with E'' reduces to that of UPA. This is because the maximum value that ${\cal P}_{tot}$ can assume is $M \bar{\cP}$. Hence, at very high ${\cal P}_{tot}$ we have $|\ba_t|^2\!=\!M \bar{\cP}$, implying that $a_{t_k}=\sqrt{\bar{\cP}}, \forall k$.
%
%=================Fig 3 coherent under IPC==================================
%
Fig. \ref{fig:MDC-IPC-coh} shows the MDC maximized under IPC versus $\bar{\cP}$ for PAC and MAC and $\rho\!=\!0.1, 0.9$. We note that at low $\bar{\cP}$, the performances of DPA and UPA are similar, since $a_{t_k}=\sqrt{\bar{\cP}}, \forall k$. This observation is in agreement with our analytical results in Section \ref{MDC-max-3}, where we showed for $\xi \! \ll \! 1$ we have $\ba_t=\sqrt{\bcP_0}$.
%
%=======================================================================
%In the above figures, we observe that as $\rho$ increases, the difference between DPA and UPA decreases. This is because by increasing $\rho$, information gathered in sensors are more correlated and sensors become more similar to each other from the FC view point. Figures also show that DPA and UPA performances are closer to each other in PAC than in MAC. Since, from (\ref{DF-coh-PAC}) and (\ref{DF-coh-MAC}), $c=M\frac{\sigma_n^2}{2}$ and $c=\frac{\sigma_n^2}{2}$ for PAC and MAC, respectively, receiver noise affects PAC more than MAC and reduces its final received SNR. This undermines the difference between the sensors in PAC which results in UPA as an optimal solution.
%=======================================================================
%
%===================Fig 4 noncoherent======================================
%
Examining the effect of increasing $\rho$ from $0.1$ to $0.9$ in Figs. \ref{fig:MDC-TPC-coh}, \ref{fig:MDC-TIPC-coh} and \ref{fig:MDC-IPC-coh}, we observe that the performance gap (i.e., the difference between the two maximized MDC values) increases as ${\cal P}_{tot}$ or $\bar{\cP}$ increases.

%===========================================================================
%
\underline{\it Trends of DPA  when $p_{d_k}$'s are different and pathloss are }
\\\underline{\it identical}:
We change the coordinate of signal source $s$ to $(2.5\mbox{m}, 0, 3\mbox{m})$, right above sensor $S_1$. With this configuration, $p_{d_k}$'s change to
$\bpd^T\!=\![0.7329, 0.6882, 0.6083, 0.5505, 0.5307, 0.5505, 0.6083, 0.6882]$, while pathloss are still identical (note that $p_{d_1}$, $p_{d_2}$, $p_{d_8}$ are the three largest).
Assuming $h_k\!=\!1, \forall k$, we investigate the impact of different $p_{d_k}$'s on DPA via plotting $\cP_{t_k}, \forall k$.
Consider Fig. \ref{fig:coh_MAC} which plots $\cP_{t_k}$ for MAC. Figs. \ref{fig:coh_MAC_case1_01}, \ref{fig:coh_MAC_case1_09} correspond to the case when the MDC is maximized under TPC, $\rho\!=\!0.1,0.9$, and ${\cal P}_{tot}\!=\!30\, \mW, \,120\, \mW, \, 240 \,\mW$. Figs. \ref{fig:coh_MAC_case2_01}, \ref{fig:coh_MAC_case2_09}  correspond to the case when the MDC is maximized under TIPC, $\rho\!=\!0.1,0.9$, $\bar{\cP}\!=\!30\, \mW $, and ${\cal P}_{tot}\!=\!30\, \mW, \,120\, \mW, \, 240 \,\mW$. Figs. \ref{fig:coh_MAC_case3_01}, \ref{fig:coh_MAC_case3_09}  correspond to the case when the MDC is maximized under IPC, $\rho\!=\!0.1,0.9$, and $\bar{\cP}\!=\!4\, \mW, \,15\, \mW, \, 30\, \mW$.
These figures show that for the cases when the MDC is maximized under TPC or under TIPC, sensors with higher $p_{d_k}$ values (i.e., more reliable local decisions) are assigned higher $\cP_{t_k}$ for all ${\cal P}_{tot}$ values.
For the case when the MDC is maximized under IPC, at low $\bar{\cP}$, $\cP_{t_k}=\bar{\cP}, \forall k$ (we have UPA). However, as $\bar{\cP}$ increases, for those sensors with smaller $p_{d_k}$ values (i.e., less reliable local decisions) we have smaller $\cP_{t_k}$.%, such that for $p_{d_i}\!< \!p_{d_j}$ we have $\cP_{t_i} \!<\!\cP_{t_j} \!\leq \! \bar{\cP}$.
We also note that as $\rho$ increases from 0.1 to 0.9 the variations of $\cP_{t_k}$ across sensors increase: for the case when the MDC is maximized under TPC, sensors with larger and smaller $p_{d_k}$'s, respectively, are assigned further more and lesser $\cP_{t_k}$; for the case when the MDC is maximized under TIPC or IPC, sensors with  smaller $p_{d_k}$'s are assigned less $\cP_{t_k}$, such that for $p_{d_i}\!< \!p_{d_j}$ we have $\cP_{t_i} \!<\!\cP_{t_j} \!\leq \! \bar{\cP}$.
Fig. \ref{fig:coh_PAC} plots $\cP_{t_k}$ for PAC. Comparing Figs. \ref{fig:coh_MAC} and \ref{fig:coh_PAC}, we note that similar trends hold true, while the variations of $\cP_{t_k}$'s across sensors in MAC, especially in TIPC and IPC, are wider than those of PAC (i.e., $\cP_{t_k}$'s across sensors in MAC are more different than UPA), due to the fact that the $c$ value in MAC is smaller.

%=======================================================
%
\underline{\it Trends of DPA when $p_{d_k}$'s are identical and pathloss are}\\
\underline{\it different}:
Suppose the coordinates of $s$ and the FC, respectively, are $(0,0,3\mbox{m})$, and $(2.5\mbox{m},0,-3\mbox{m})$, where the FC is right below sensor $S_1$. With this configuration, $p_{d_k}\!=\!0.6615, \forall k$, whereas the pathloss are different (note that the pathloss corresponding to $S_1$, $S_2$ and $S_8$ are the three smallest).
We observed that $\cP_{t_k}$'s for different $\rho$ values remain the same. Hence, in this part we focus on $\rho\!=\!0.1$. DPA is shown in figures \ref{fig:coh_MAC-pathloss2} and \ref{fig:coh_PAC-pathloss2} for MAC and PAC, respectively. We observe that, for both PAC and MAC under TPC or TIPC, sensors with larger pathloss are assigned higher $\cP_{t_k}$ (we refer to as inverse water filling). Examining the case when the MDC is maximized under IPC and $\bar{\cP}\!=\!4 \mW$, we have  $\cP_{t_k}=\bar{\cP}, \forall k$ (UPA) in PAC, whereas sensors with larger pathloss are assigned higher $\cP_{t_k}$ (inverse water filling) in MAC. This is due to the fact that the $c$ value in MAC is smaller and therefore, the effective received signal-to-noise ratio in MAC is larger, leading to variations of $\cP_{t_k}$'s across sensors.
To investigate more the effect of different pathloss on DPA, we move the FC further from the sensors and change its coordinate to $(2.5\mbox{m}, 0, -10\mbox{m})$, to effectively increase the pathloss between all the sensors and the FC (and decrease received power at the FC), while still $S_1$, $S_2$ and $S_8$ have the three smallest pathloss. We observe that in TPC and TIPC sensors with smaller pathloss are assigned higher $\cP_{t_k}$ (we refer to as water filling), whereas in IPC, $\cP_{t_k}=\bar{\cP}, \forall k$ (we have UPA).
\vspace{-0.4cm}
\section{Conclusion}\label{conc}
\vspace{-0.3cm}
We considered a channel aware binary distributed detection problem in a WSN with coherent reception and linear fusion rule at the FC, where observations are correlated Gaussian and sensors are unaware of such correlation when making decisions. Assuming that the sensors and the FC are connected via PAC or MAC, we studied power allocation schemes that maximize the MDC at the FC. Our numerical results suggest that when MDC-based power allocation and optimal transmit power allocation are employed at low $P_{tot}$, the resulting $P_{D_0}$ is very close for both linear fusion rule and the LRT rule. For homogeneous sensors with identical pathloss, MAC outperforms PAC at low ${\cal P}_{tot}$ under TPC and TIPC (low $\bar{\cP}$ under IPC), whereas PAC converges to MAC at high ${\cal P}_{tot}$. Compared with equal power allocation, performance enhancement offered by the MDC-based power allocation is more significant in MAC and this improvement reduces as correlation increases. For inhomogeneous sensors with identical pathloss, sensors with more reliable decisions are assigned higher powers. As correlation increases, the variations of power across sensors increase: sensors with more (less) reliable decisions, are assigned further more (lesser) powers. For homogeneous sensors with different pathloss, power allocations are invariant as correlation changes. At low (high) received power at the FC, sensors with smaller (larger) pathloss are assigned higher powers under TPC and TIPC.
%
%=====================================================================
\vspace{-0.5cm}
%\section{Appendix}
\appendix
%%\vspace{-0.3cm}
%===============Appendix A===============================================
%
\vspace{-0.3cm}
\subsection{Proof of Lemma \ref{max-dir}}\label{appendix-proof-lemma2}
%%\vspace{-0.3cm}
%
%Clearly, if $\bx^*$ maximizes $f(\bx)$, any non-zero scale of $\bx^*$ also maximizes $f(\bx)$.
Consider $\bQ \! =\! \bD \bLa \bD^T$, where $\bLa\!=\!\DIAG\{[\lambda_1,...,\lambda_M]^T\}$ and $\lambda_k$'s are the positive eigenvalues of $\bQ$ and columns of $\bD$ are the eigenvectors of $\bQ$. We can rewrite $f(\bx)$ as
%
%\begin{equation*}
$f(\bx) \!= \! \frac{(\bx^T (\bD\sqrt{\bLa})(\bD\sqrt{\bLa})^{-1}\bb_t )^2}{\bx^T\bD\sqrt{\bLa}\sqrt{\bLa}\bD^T \bx} \!= \!\frac{\bar{\bx}^T \bar{\bb_t} \bar{\bb_t}^T \bar{\bx}}{\bar{\bx}^T\bar{\bx}}$,
%\end{equation*}
%
where $\bar{\bx} \!= \! \sqrt{\bLa}\bD^T\bx$ and $\bar{\bb_t} \!= \!(\bD\sqrt{\bLa})^{-1} \bb_t$. Using the Rayleigh Ritz inequality \cite{matrix-diff-Book}, we find $f(\bx)  \leq \lambda_{max}(\bar{\bb_t}\bar{\bb_t}^T)$ and the equality is
achieved when $\bar{\bx}$ is the corresponding eigenvector of $\lambda_{max}(\bar{\bb_t}\bar{\bb_t}^T)$. Since $\bar{\bb_t}\bar{\bb_t}^T$ is rank-one with the eigenvalue $|\bar{\bb_t}|^2$ and the eigenvector $\bar{\bb_t}$, we have $f(\bx) \leq  \bar{\bb_t}^T\bar{\bb_t}= \bb_t^T \bQ^{-1}\bb_t$ and the equality is achieved at $\bar{\bx}^* = \bar{\bb_t}$ or $\bx^* = \bQ^{-1}\bb_t $ and its non-zero scales.
%\vspace{-0.4cm}
%
%=========Appendix B================================================
%
%%\vspace{-0.3cm}
\subsection{Proving that solution of $(\cO_2^l)$ satisfies TPC at the equality}\label{appendix-proof-O3-TPC}
%%\vspace{-0.3cm}
%
Consider $(\cO_2^l)$ and let $\mu$ and $\bpsi$, $\bphi$ be the Lagrange multipliers corresponding to $\ba_t^T \ba_t \! \leq \! {\cal P}_{tot}$, $\ba_t \! \preceq \! \sqrt{\bcP_0}$ and $ \ba_t \! \succeq \! \bzero $, respectively.
%The Lagrangian cost function is ${\cal L}(\ba_t, \mu, \boldsymbol\phi, \boldsymbol\psi)\!= \! \frac{c}{\ba_t^T \bb_t \bb_t^T \ba_t}+\mu ({\cal P}_{tot}-\ba_t^T\ba_t) + \boldsymbol\psi^T (\ba_t - \sqrt{\bcP_0}) -\boldsymbol\phi^T \ba_t$.
The KKT conditions are
\begin{align}
&\frac{-2cb_{t_k} \bb_t^T\ba_t}{|\ba_t^T \bb_t \bb_t^T \ba_t|^2}-2\mu a_{t_k}+\psi_k-\phi_k=0,~~~ k=1,...,M\label{kkt-low-snr-1}\\
&\mu(\ba_t^T\ba_t-{\cal P}_{tot})=0,~~\mu\geq 0,~~\ba_t^T\ba_t\leq {\cal P}_{tot}\label{kkt-low-snr-2}\\
&\psi_k (a_{t_k}-\sqrt{{\cal P}_{0_k}})=0,~~\psi_k \geq 0,~~a_{t_k} \leq \sqrt{{\cal P}_{0_k}},~~\mbox{and}\nonumber\\
&\phi_k a_{t_k}=0,~~\phi_k\geq 0,~~ a_{t_k} \geq 0 \label{kkt-low-snr-4}
\end{align}
We show $\mu \! \neq \!0$.
Substituting $\mu\!=\!0$ in (\ref{kkt-low-snr-1}), we have
$\frac{-2cb_{t_k} \bb_t^T\ba_t}{|\ba_t^T \bb_t \bb_t^T \ba_t|^2}\!=\!\phi_k-\psi_k$.
Since $ \bb_t \! \succ \! \bzero$, $ \ba_t \! \succeq \! \bzero$, $ \ba_t \! \neq \! \bzero$, we find $\frac{-2cb_{t_k} \bb_t^T\ba_t}{|\ba_t^T \bb_t \bb_t^T \ba_t|^2} \!= \! \phi_k-\psi_k \!< \! 0$. Note that $\psi_k$ and $\phi_k$ cannot be both positive,
since from (\ref{kkt-low-snr-4}) it is infeasible to have  $\ba_{t_k}\!=\!\sqrt{\cP_0}_k$ and $\ba_{t_k}\!=\!0$. Therefore $\psi_k$ or $\phi_k$ must be zero.
Since $\phi_k-\psi_k\!<\!0$, we conclude that $\phi_k\!=\!0$ and $\psi_k\!>\!0$. Now, from (\ref{kkt-low-snr-4}) we have $\ba_{t_k}\!=\!\sqrt{{\cal P}_{0_k}}$, leading to  $\ba_t^T\ba_t\!=\!\sum_{k=1}^M \cP_{0_k} \!>\! {\cal P}_{tot}$, which contradicts \eqref{kkt-low-snr-2}.
%to ensure that total and individual power constraints are active.
Therefore, $\mu \! \neq \! 0 $ and we have $\ba_t^T\ba_t\!=\!{\cal P}_{tot}$.
\subsection{Analytical Solution of $(\cO_2'')$}\label{coh-case2}
%\vspace{-1.0cm}
Since $\ba^{*T}_{t1}\ba^{*}_{t1}\!=\!{\cal P}_{tot}$, the objective function in $(\cO_2'')$ reduces to ${\cal P}_{tot} - \ba^{*T}_{t1} \ba_t$. Let $\mu$ and $\bpsi$ and $\bphi$ be the Lagrange multipliers corresponding to  $\ba_t^T\ba_t \!=\! {\cal P}_{tot}$, $\ba_t \! \preceq \!\sqrt{\bcP_0}$ and $ \ba_t \! \succeq \! \bzero$. The KKT conditions are
%
%\vspace{-0.2cm}
\begin{align*}
&2\mu a_{t_k}   - a^*_{t1_k}  + \psi_k - \phi_k  =0,~~~ k=1,...,M  \nonumber\\ %\label{main_eq2}
&\mu(\ba_t^T\ba_t-{\cal P}_{tot})=0,~~\ba_t^T\ba_t= {\cal P}_{tot}\nonumber\\
& \psi_k(a_{t_k} -\sqrt{\cP_{0_k}})=0,~~ \psi_k\geq 0,~~a_{t_k} \leq \sqrt{\cP_{0_k}}, ~~\mbox{and}\nonumber\\
&\phi_k a_{t_k} = 0,~~\phi_k\geq 0,~~a_{t_k} \geq 0 %\label{zero_ineq2}
\end{align*}
%%\vspace{-0.2cm}
%
%===============================================================
%AZADEH: I understood the argument below, however, due to lack of space, I suggest to remove it.
%
%We note that $a^*_{t1_k} \geq 0$. For $a^*_{t1_k} > 0$ we have these options: (a) $a_{t_k} =0$: Following \eqref{pi_ineq2}, $\psi_k = 0$. If we substitute these results in \eqref{main_eq2}, since $\phi_k\geq 0$ from \eqref{zero_ineq2}, we see that this case never happens. (b) $a_{t_k} = \sqrt{\cP_{0_k}}$: From \eqref{zero_ineq2}, we have $\phi_k = 0$. If we substitute them in \eqref{main_eq2}, to keep $\psi_k\geq 0$ we should have $\mu \leq \frac{a^*_{t1_k}}{2 \sqrt{\cP_{0_k}} }$. (c) $0 < a_{t_k} < \sqrt{\cP_{0_k}}$: In this case $\phi_k = \psi_k = 0$. If we apply them to \eqref{main_eq2}, we reach to $a_{t_k} = \frac{a^*_{t1_k}}{2\mu}$. To satisfy $0 < a_{t_k} < \sqrt{\cP_{0_k}}$ we should have $\mu > \frac{a^*_{t1_k}}{2 \sqrt{\cP_{0_k}}}$. We summarize this discussion as
%
%=========================================================================
%
Solving the KKT conditions for $\ba^*_{t1} \! \succ \! \bzero$ yields
\begin{equation}\label{case2-a}
a_{t_k} = \left\{
\begin{array}{ll}
                   \frac{a^*_{t1_k}}{2\mu}, & ~ \text{for}~\mu \geq \frac{a^*_{t1_k}}{2 \sqrt{\cP_{0_k}}}\\
                   \sqrt{\cP_{0_k}}, &~ \text{for}~\mu < \frac{a^*_{t1_k}}{2 \sqrt{\cP_{0_k}}}. \\
                                    \end{array}\right.
\end{equation}
To find positive $\mu$ and consequently $a_{t_k}$, suppose sensors are sorted such that $\frac{a_{t1_{i_1}}^*}{\sqrt{\mathcal{P}_{0_{i_1}}}} \! \geq \! ...\! \geq \!\frac{a_{t1_{i_M}}^*}{\sqrt{\mathcal{P}_{0_{i_M}}}}$. Assume for $1 \leq m \leq M $ we have $a_{t_{i_1}}\!=\!\sqrt{{\cP}_{0_{i_1}}},...,a_{t_{i_m}}\!=\!\sqrt{{\cP}_{0_{i_m}}}$. Substituting (\ref{case2-a}) into $\sum_{j=1}^M a_{t_{i_j}}^2\!=\! {\cal P}_{tot}$ and solving for $\mu$, we find
%
%we find $\sum_{j=1}^m {\cP}_{0_{i_j}} \! + \! \frac{1}{4\mu^2}\sum_{j=m+1}^M a^{*2}_{t1_{i_j}} \! =\! {\cal P}_{tot}$ which results in
%
$\mu^2=\frac{\sum_{j=m+1}^M a^{*2}_{t1_{i_j}}}{4({\cP}_{tot}-\sum_{j=1}^{m} {\cP}_{0_{i_j}})}$. Note that $\mu$ depends on $m$. If $\frac{a_{t1_{i_{m+1}}}^*}{2\sqrt{{\cP}_{0_{i_{m+1}}}}} \leq \mu \leq \frac{a_{t1_{i_m}}^*}{2\sqrt{{\cP}_{0_{i_m}}}}$, the above assumption is valid, and we substitute $\mu$ in (\ref{case2-a}) to calculate $a_{t_{i_{m+1}}},...,a_{t_{i_M}}$. Otherwise, we increase $m$ by one and repeat the procedure, until we reach $\mu$ that lies within the proper interval.
Although $({\cO}_3'')$ is not convex, we show below that the KKT solution in (\ref{case2-a}) is unique.
Suppose the solution in (\ref{case2-a}) is not unique, i.e.,  there exist $1 \leq m, m' \leq M$, $m'  \! \geq \! m+1$ such that
\begin{eqnarray}\label{case2-c}
a_{t_{i_j}} =  \left\{
\begin{array}{ll}
\frac{a_{t1_{i_j}}^*}{2\mu}, & \quad\text{for } m+1 \leq j \leq M, \mbox{and}~ \mu\geq \frac{a^{*}_{t1_{i_j}}}{2\sqrt{{\cP}_{0_{i_j}}}}, \\
\sqrt{\mathcal{P}_{0_{i_j}}}, & \quad\text{for } 1 \leq j \leq m, \mbox{and}~ \mu< \frac{a^{*}_{t1_{i_j}}}{2\sqrt{{\cP}_{0_{i_j}}}}
\end{array}\right.
\end{eqnarray}
Also, $a_{t_{i_{j'}}}$ can be obtained  from (\ref{case2-c}) by substituting $j,m,\mu$ with $j',m',\mu'$, respectively.
%======Azadeh: this is repeatition, I replaced it with text=========================
%\begin{eqnarray}\label{case2-d}
%\mu'^2=\frac{\sum_{j=m'+1}^M a^{*2}_{t1_{i_j}}}{4({\cP}_{tot}-\sum_{j=1}^{m'} {\cP}_{0_{i_j}})}~\text{and}~a_{t_{i_j}} = \left\{
%\begin{array}{ll}
%\frac{a_{t1_{i_j}}^*}{2\mu'}, & \quad\text{for } j \! \in \!  \{m'+1,...M\}, \mbox{where}~ \mu'\geq \frac{a^{*}_{t1_{i_j}}}{2\sqrt{{\cP}_{0_{i_j}}}},\\
%\sqrt{\mathcal{P}_{0_{i_j}}}, & \quad\text{for }  j \! \in \! \{1,...m'\}, \mbox{where}~ \mu'< \frac{a^{*}_{t1_{i_j}}}{2\sqrt{{\cP}_{0_{i_j}}}}
%\end{array}\right.
%\end{eqnarray}
%====================================================================
%
Since $m' \! \geq \! m+1$, from (\ref{case2-c}), we have $\mu^2\geq\frac{a^{*2}_{t1_{i_{m+1}}}}{4{\cP}_{0_{i_{m+1}}}}$.
On the other hand, due to sensor ordering, we have $\frac{a^{*2}_{t1_{i_{m+1}}}}{4{\cP}_{0_{i_{m+1}}}}\geq...\geq\frac{a^{*2}_{t1_{i_{m'}}}}{4{\cP}_{0_{i_{m'}}}}$. Applying the mediant inequality, we obtain
$\frac{\sum_{j=m+1}^{m'}a^{*2}_{t1_{i_j}}}{4\sum_{j=m+1}^{m'}{\cP}_{0_{i_j}}}\geq\frac{a^{*2}_{t1_{i_{m+1}}}}{4{\cP}_{0_{i_{m+1}}}}$.
We observe that two different fractions are greater than or equal to the $\frac{a^{*2}_{t1_{i_{m+1}}}}{4{\cP}_{0_{i_{m+1}}}}$. Hence, using the mediant inequality and definition of $\mu^2$, we find
%
%%\vspace{-0.2cm}
\begin{align*}
\frac{\sum_{j=m+1}^M a^{*2}_{t1_{i_j}}-\sum_{j=m+1}^{m'} a^{*2}_{t1_{i_j}}}{4({\cP}_{tot}-\sum_{j=1}^{m} {\cP}_{0_{i_j}}-\sum_{j=m+1}^{m'} {\cP}_{0_{i_j}})}=\mu'^2 \geq \frac{a^{*2}_{t1_{i_{m+1}}}}{4{\cP}_{0_{i_{m+1}}}}.
\end{align*}
%%\vspace{-0.1cm}
%
However, this inequality contradicts the one in (\ref{case2-c}) when $j,\mu$ are replaced with $j',\mu'$. Hence, our assumption regarding the existence of $m,m'$ is incorrect and the solution in (\ref{case2-a}) is unique.
%================Appendix F==========================================================
%
%%\vspace{-0.2cm}
\subsection{Proving that solution of $(\cO_3^l)$ is equal to IPC upper limit}\label{appendix-proof-O5-IPC}
%%\vspace{-0.2cm}
%
Consider $(\cO_3^l)$ and let  $\bpsi$, $\bphi$ be the Lagrange multipliers corresponding to  $\ba_t \! \preceq \! \sqrt{\bcP_0}$ and $ \ba_t \! \succeq \! \bzero$, respectively.
%The Lagrangian cost function is ${\cal L}(\ba_t,\boldsymbol\phi, \boldsymbol\psi)\! = \! \frac{c}{\ba_t^T \bb_t \bb_t^T \ba_t}+ \boldsymbol\psi^T (\ba_t - \sqrt{\bcP_0}) -\boldsymbol\phi^T \ba_t$.
The KKT conditions are
\begin{align}
&\frac{-2cb_{t_k} \bb_t^T\ba_t}{|\ba_t^T \bb_t \bb_t^T \ba_t|^2}+\psi_k-\phi_k=0,~~~ k=1,...,M\nonumber\\ %\label{kkt-low-snr-5}
&\psi_k (a_{t_k}-\sqrt{{\cal P}_{0_k}})=0,~~\psi_k \geq 0,~~a_{t_k} \leq \sqrt{{\cal P}_{0_k}},\nonumber\\
&\mbox{and}~~~
\phi_k a_{t_k}=0,~~\phi_k \geq 0,~~ a_{t_k} \geq 0 \label{kkt-low-snr-7}
\end{align}
Since $ \bb_t \! \succ \! \bzero$, $ \ba_t \! \succeq \! \bzero$, $ \ba_t \! \neq  \! \bzero$, we find $\frac{-2cb_{t_k} \bb_t^T\ba_t}{|\ba_t^T \bb_t \bb_t^T \ba_t|^2} \!= \!\phi_k-\psi_k \!< \!0$. Note that $\psi_k$ and $\phi_k$ cannot be both positive, since  from (\ref{kkt-low-snr-7}) it is infeasible to have $\ba_{t_k}\!=\!\sqrt{\cP_0}_k$  and $\ba_{t_k}\!=\!0$. Therefore either $\psi_k$ or $\phi_k$ must be zero. Since $\phi_k-\psi_k\!<\!0$, we conclude that $\phi_k\!=\!0$ and $\psi_k\!>\!0$. Now, from (\ref{kkt-low-snr-7}) we have  $\ba_{t_k}\!=\!\sqrt{{\cal P}_{0_k}}$, or equivalently, $\ba_t \! = \! \sqrt{\bcP_0}$.
%
%=========================Appendix G==================================================
%
%\vspace{-0.4cm}
\subsection{Regarding Analytical Solution to $(\cO_3)$ with Independent Observations}\label{uniqueness}
%\vspace{-0.2cm}
%
First, we show that at least one of ${a}_{t_k}$'s in (\ref{a-tk}) is equal to $\sqrt{\cP_{0}}$.
Suppose ${a}_{t_k}\!<\!\sqrt{\cP_{0}},\forall k$. From \eqref{a-tk}, we have ${a}_{t_k} \!= \! \frac{b_{t_k}}{[\bK_t]_{kk}} \eta $ or equivalently  $\eta  \!= \!\frac{a_{t_k}[\bK_t]_{kk}{a}_{t_k}}{b_{t_k}{a}_{t_k}}, \forall k$.
%==========due to lack of space, I omitted below===================================================
%Recall mediant inequality which states for $M$ ordered fractions $\frac{\alpha_1}{\beta_1} \! \geq \! \frac{\alpha_2}{\beta_2}\! \geq \!... \! \geq \!\frac{\alpha_M}{\beta_M}$ with non negative $\alpha_k$ and positive $\beta_k$ we have $\frac{\alpha_M}{\beta_M}  \! \leq \!   \frac{\sum_{k=1}^M \alpha_k}{\sum_{k=1}^M \beta_k}   \leq \!\frac{\alpha_1}{\beta_1}$. For the special case where the fractions are equal, they are also identical to their mediant.
%========================================================================================
Using the mediant inequality, we rewrite $\eta$ as
%
%\begin{equation*}
$\eta \!= \! \frac{\sum_{k=1}^M {a}_{t_k}[\bK_t]_{kk} {a}_{t_k}}{\sum_{k=1}^M b_{t_k}{a}_{t_k}} \! = \! \frac{{\ba}_t^{T} \bK_t {\ba}_t}{\bb_t^T{\ba}_t} \! < \! \frac{{\ba}_t^T \bK_t {\ba}_t + c}{\bb_t^T {\ba}_t}$,
%\end{equation*}
%
since $c\!>\!0$.
However, this violates the definition of $\eta$ in (\ref{eta-definition}) and contradicts our initial assumption. Hence, there should be at least one ${a}_{t_k} \!= \! \sqrt{\cP_{0}}$.
Next, we show that, although $(\cO_3)$ is not convex, the KKT solution in (\ref{a-tk}) is unique.
Suppose sensors are sorted such that
$\frac{b_{t_{i_1}}}{[\bK_t]_{i_1 i_1}} \! \geq \! ... \! \geq \!\frac{b_{t_{i_M}}}{[\bK_t]_{i_M i_M}}$, i.e., $a_{t_{i_1}} \! \geq \! ... \! \geq \! a_{t_{i_M}}$, where at least  $a_{t_{i_1}} \!=\! \sqrt{\cP_{0}}$.
%Let ${\ba}_t\!=\![a_{t_{i_1}},..., a_{t_{i_m}}, a_{t_{i_{m+1}}},..., a_{t_{i_M}}]^T$. According to Appendix  \ref{uniqueness} at least $a_{t_{i_1}} \!=\! \sqrt{\cP_{0}}$. Assume for $m \! \in \! \{1, 2, ... , M\}$ we have $a_{t_{i_1}} \!= \! ... \!= \! a_{t_{i_m}} \!= \! \sqrt{\cP_{0}}$,
Assume that the solution in (\ref{a-tk}) is not unique, i.e., there exist two indices $m, m' \! \in \!\{1,...,M\}$, $m' \! \geq \! m+1$ such that
%
%\vspace{-0.2cm}
\begin{eqnarray*}\label{a-tk-a}
a_{t_{i_j}} = \left\{
\begin{array}{ll}
\frac{b_{t_{i_j}}}{  [\bK_t]_{i_j i_j}    } \eta, & \quad\text{for } m+1 \leq j \leq M,~ \eta \! \leq \!\frac{   [\bK_t]_{i_j i_j}   }{ b_{t_{i_j}}   } \sqrt{\cP_{0}}\\
\sqrt{\cP_{0}}, & \quad\text{for }  1 \leq j \leq m, \mbox{and}~ \eta \! > \!\frac{   [\bK_t]_{i_j i_j}   }{ b_{t_{i_j}}   } \sqrt{\cP_{0}}
\end{array}\right.
\end{eqnarray*}
%%\vspace{-0.1cm}
%
where $\eta \!= \! \frac{\cP_{0} \sum_{\ell=1}^{m}  [\bK_t]_{i_\ell i_\ell} + c}{\sqrt{\cP_{0}} \sum_{\ell=1}^{m}b_{t_{i_\ell}}}$. Also, $a_{t_{i_{j'}}}$ can be obtained  from above by substituting $j,m,\eta$ with $j',m',\eta'$, respectively.
Since $m' \! \geq \! m+1$, from  (\ref{a-tk-a}), we find
%\begin{align}\label{inequality-rho}
$\eta \! \leq \! \frac{[\bK_t]_{i_{m'}i_{m'}}}{b_{t_{i_{m'}}}}\sqrt{\cP_{0}}$.
%\end{align}
On the other hand, due to sensor ordering, we have $\frac{b_{t_{i_{m+1}}}}{[\bK_t]_{i_{m+1} i_{m+1}}} \! \geq \! ... \! \geq \!\frac{b_{t_{i_{m'}}}}{[\bK_t]_{i_{m'} i_{m'}}}$. Applying the mediant inequality, we obtain
%
%$\frac{b_{t_{i_{m'}}}}{[\bK_t]_{{i_{m'}i_{m'}}}} \! \leq \! \frac{\sum_{\ell=m+1}^{m'}b_{t_{i_\ell}}}{\sum_{\ell=m+1}^{m'}[\bK_t]_{{i_\ell  i_\ell}}} ~\Rightarrow \frac{ {\cP_{0}} \sum_{\ell=m+1}^{m'}[\bK_t]_{{i_\ell  i_\ell}}}{ \sqrt{\cP_{0}}\sum_{\ell=m+1}^{m'}b_{t_{i_\ell}}}  \! \leq \! \frac{[\bK_t]_{{i_{m'}i_{m'}}}}{b_{t_{i_{m'}}}}\sqrt{\cP_{0}}$.
%
$ \frac{ {\cP_{0}} \sum_{\ell=m+1}^{m'}[\bK_t]_{{i_\ell  i_\ell}}}{ \sqrt{\cP_{0}}\sum_{\ell=m+1}^{m'}b_{t_{i_\ell}}}  \! \leq \! \frac{[\bK_t]_{{i_{m'}i_{m'}}}}{b_{t_{i_{m'}}}}\sqrt{\cP_{0}}$.
We observe that two different fractions are less than or equal to $\frac{[\bK_t]_{i_{m'}i_{m'}}}{b_{t_{i_{m'}}}}\sqrt{\cP_{0}}$. Thus %using the mediant inequality we find
\begin{align*}
&\frac{\cP_{0} \sum_{\ell=1}^{m} [\bK_t]_{{i_\ell i_\ell}}+ c + \cP_{0}\sum_{\ell=m+1}^{m'} [\bK_t]_{{i_\ell i_\ell}}}{\sqrt{\cP_{0}} \sum_{\ell=1}^{m}b_{t_{i_\ell}}+\sqrt{\cP_{0}}\sum_{\ell=m+1}^{m'}b_{t_{i_\ell}}} \!= \! \eta' \leq\\ &\frac{[\bK_t]_{i_{m'}i_{m'}}}{b_{t_{i_{m'}}}}\sqrt{\cP_{0}}.
\end{align*}
However, this inequality contradicts the one in (\ref{a-tk-a}) when $j,\eta$ are replaced with $j',\eta'$. Hence, our assumption regarding the existence of two indices $m,m'$ is incorrect and the solution in (\ref{a-tk}) is unique.
\vspace{-0.5cm}
\bibliographystyle{IEEETran}
\normalem
\bibliography{Refs}
%%%%%%%%%%%%%%%%%%%%%%%%%%%%%%%
\iftrue
\newpage
\onecolumn
%
%===============LRT vs Linear, Optimal rule vs. MDC===================
\begin{figure}[t]
\centering
\subfigure[]{
\includegraphics[width=5.0in,height=3.0in]{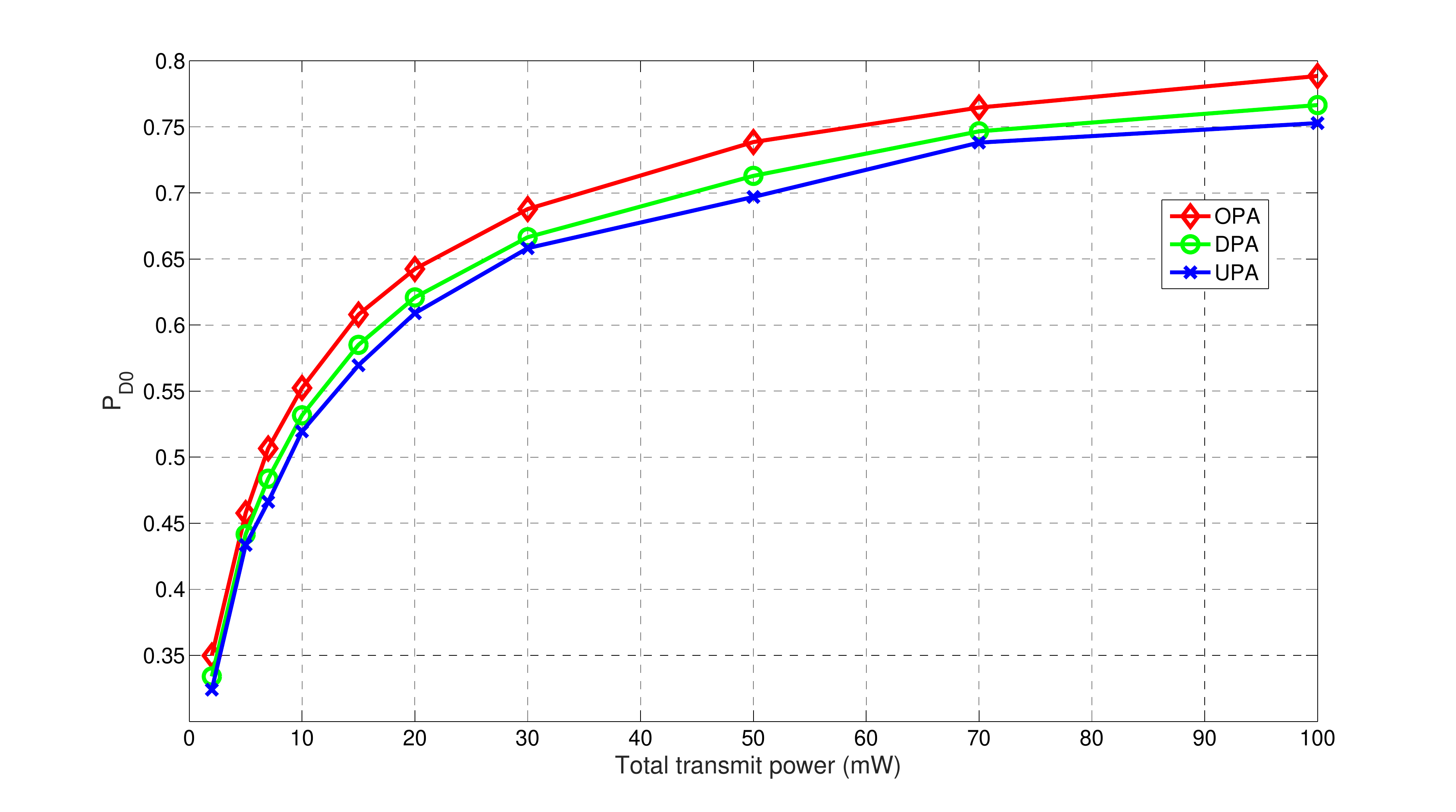}
\label{fig:Linear_2sensor}
}
\subfigure[]{
\includegraphics[width=5.0in,height=3.0in]{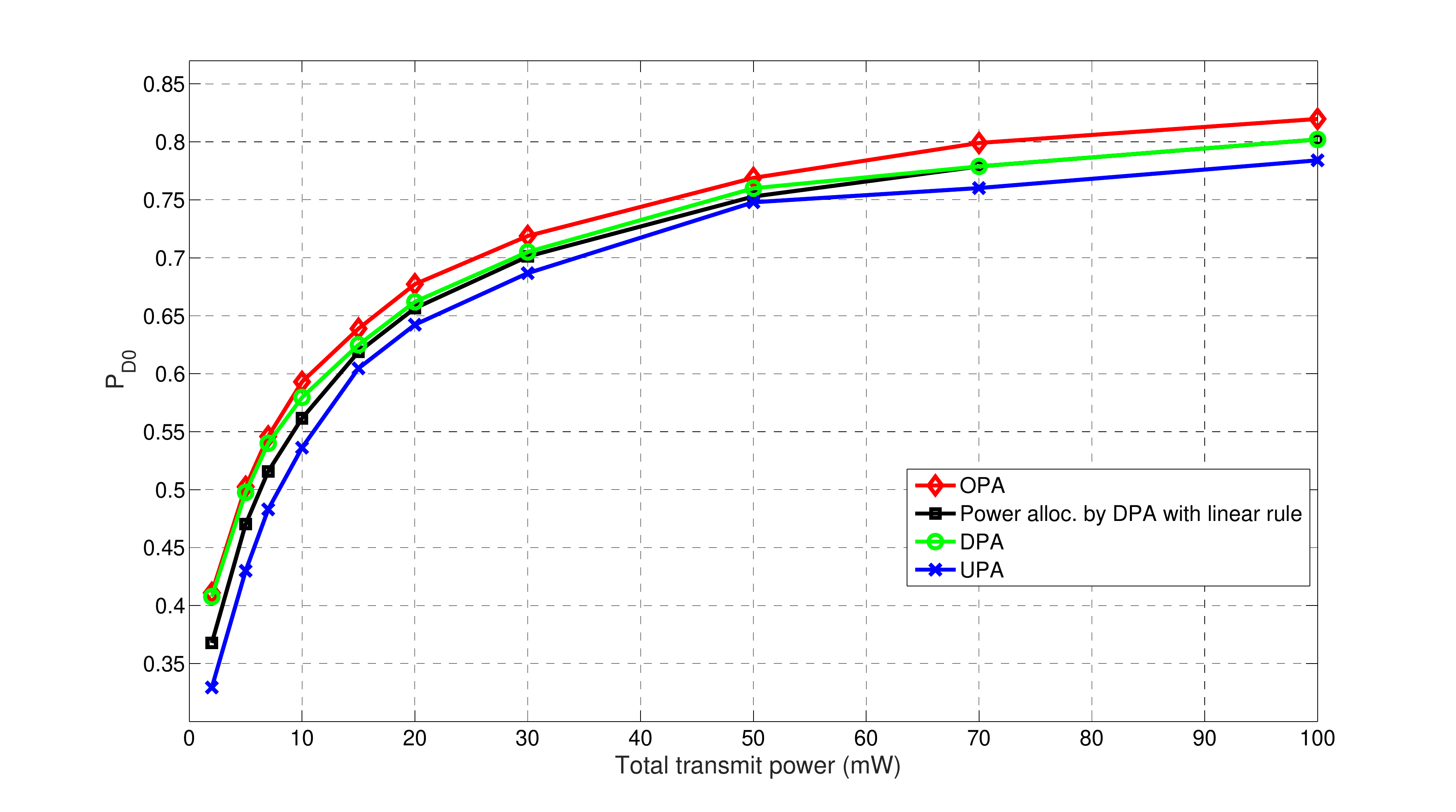}
\label{fig:LRT_2sensor}
}
\caption{$P_{D_0}$ under TPC versus ${\cal P}_{tot}$ for a 2-sensor PAC with identical $p_{d_k}$'s and pathloss and $\rho \!=\!0.1$: (a) Linear fusion rule, (b) LRT fusion rule.}
\label{fig:LRT_Linear}
\end{figure}
%======================================================================
%
%========performance of DPA for coherent (PAC and MAC)============
\begin{figure}[b]
\centering
\includegraphics[width=120mm]{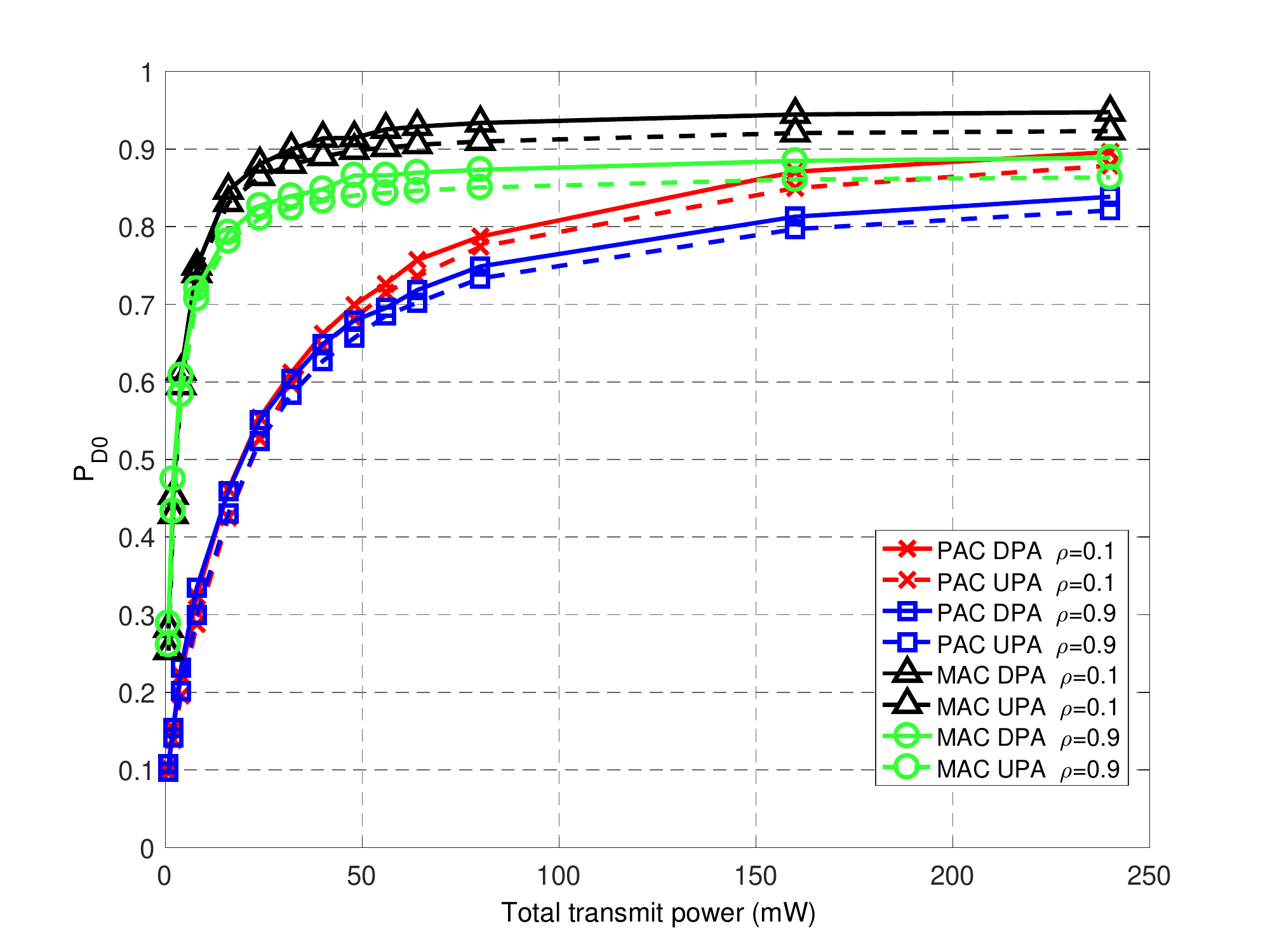}
\caption{$P_{D_0}$ under TPC versus ${\cal P}_{tot}$.}
\label{fig:PD0-TPC-coh}
\end{figure}
\begin{figure}
\centering
\includegraphics[width=120mm]{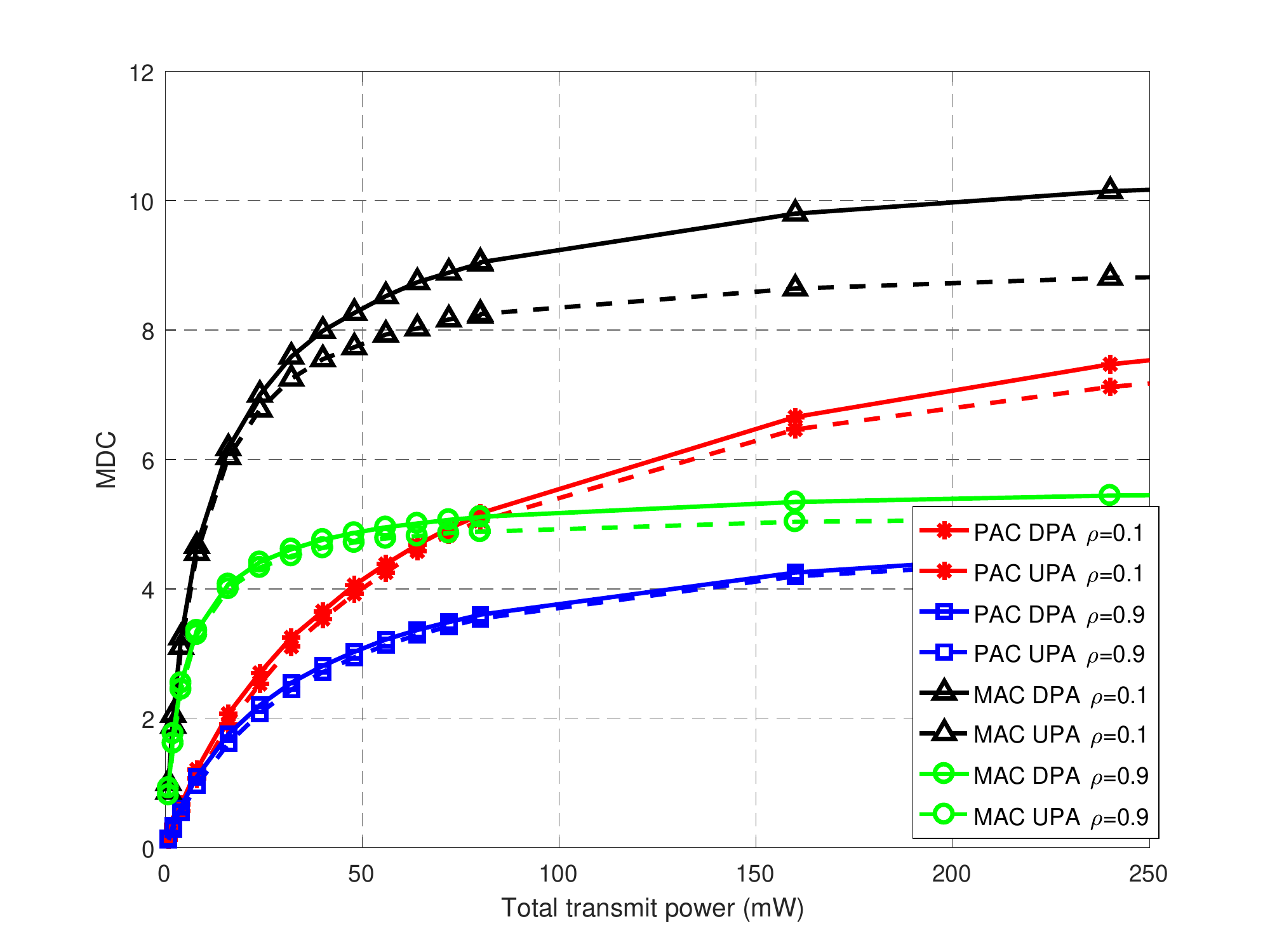}
\caption{Maximized MDC under TPC versus ${\cal P}_{tot}$.}
\label{fig:MDC-TPC-coh}
\end{figure}
\begin{figure}
\centering
\includegraphics[width=120mm]{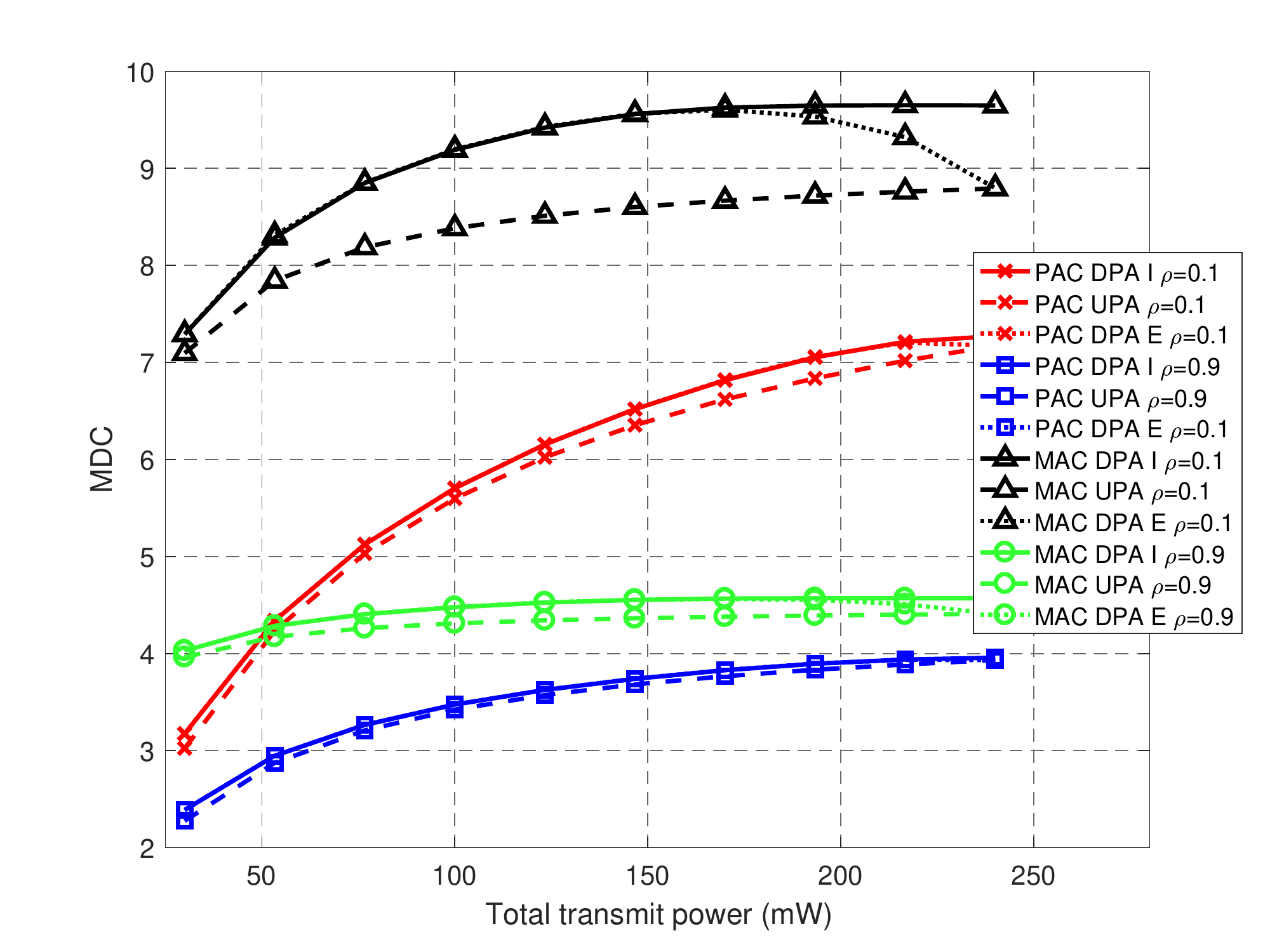}
\caption{Maximized MDC under TIPC versus ${\cal P}_{tot}$ and $\bar{\cP} = 30\, \mW$.}
\label{fig:MDC-TIPC-coh}
\end{figure}
\begin{figure}
\centering
\includegraphics[width=120mm]{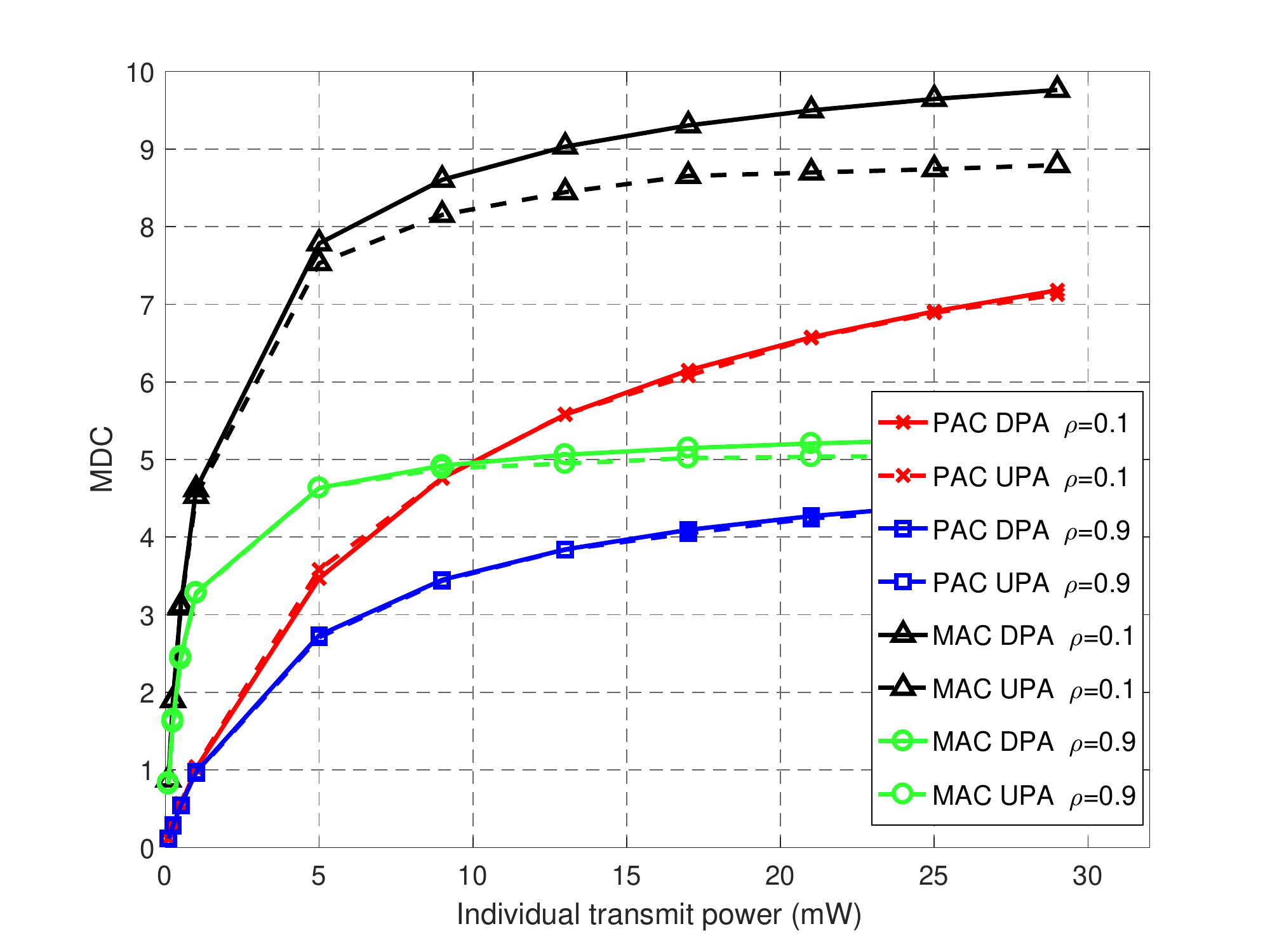}
\caption{Maximized MDC under IPC versus $\bar{\cP}$.}
\label{fig:MDC-IPC-coh}
\end{figure}
%===========Trends of DPA when pdk across sensors are different===========================
\begin{figure}[b]
\centering
\subfigure[]{
\includegraphics[width=2.8in]{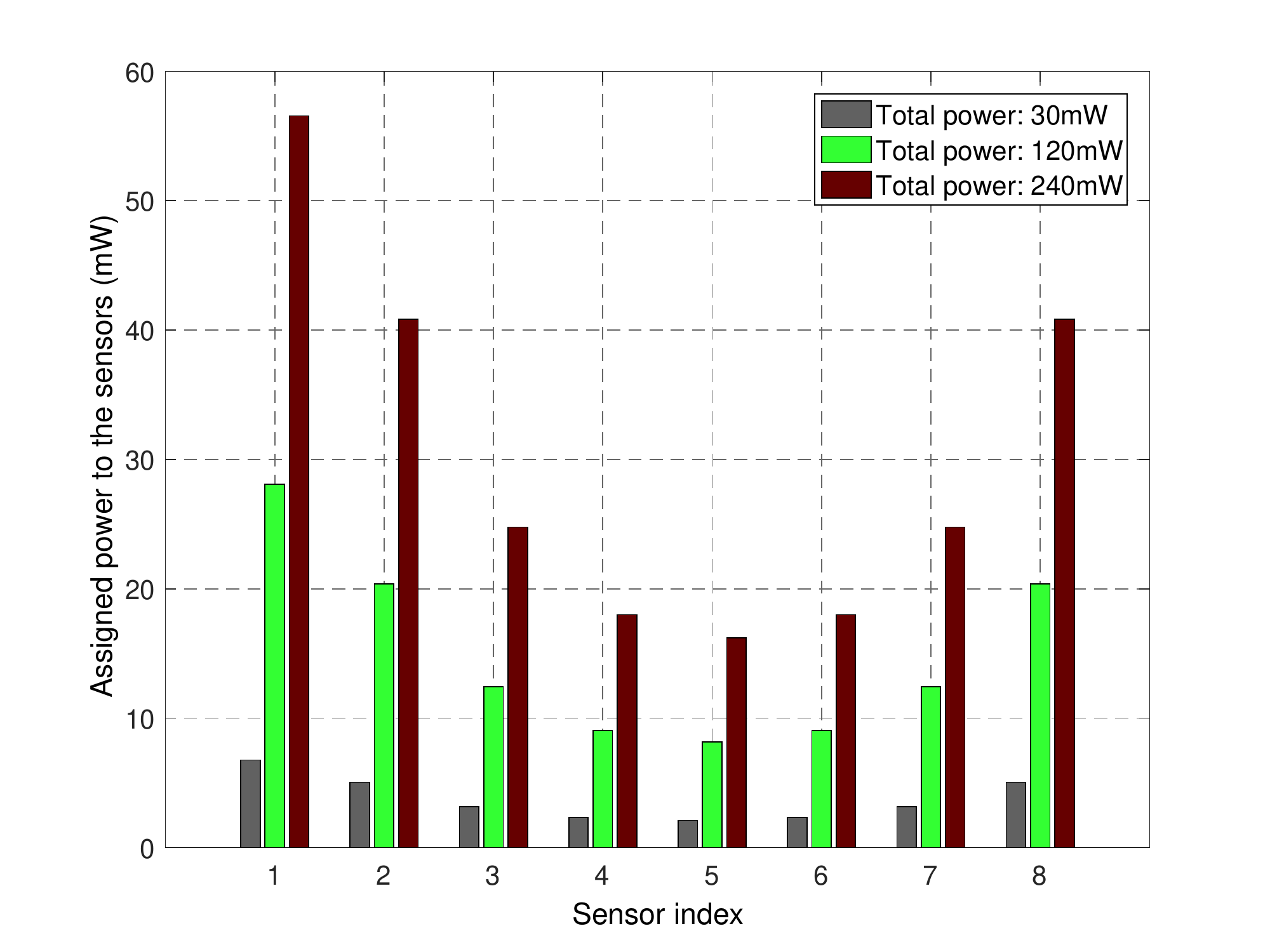}
\label{fig:coh_MAC_case1_01}
}
\subfigure[]{
\includegraphics[width=2.8in]{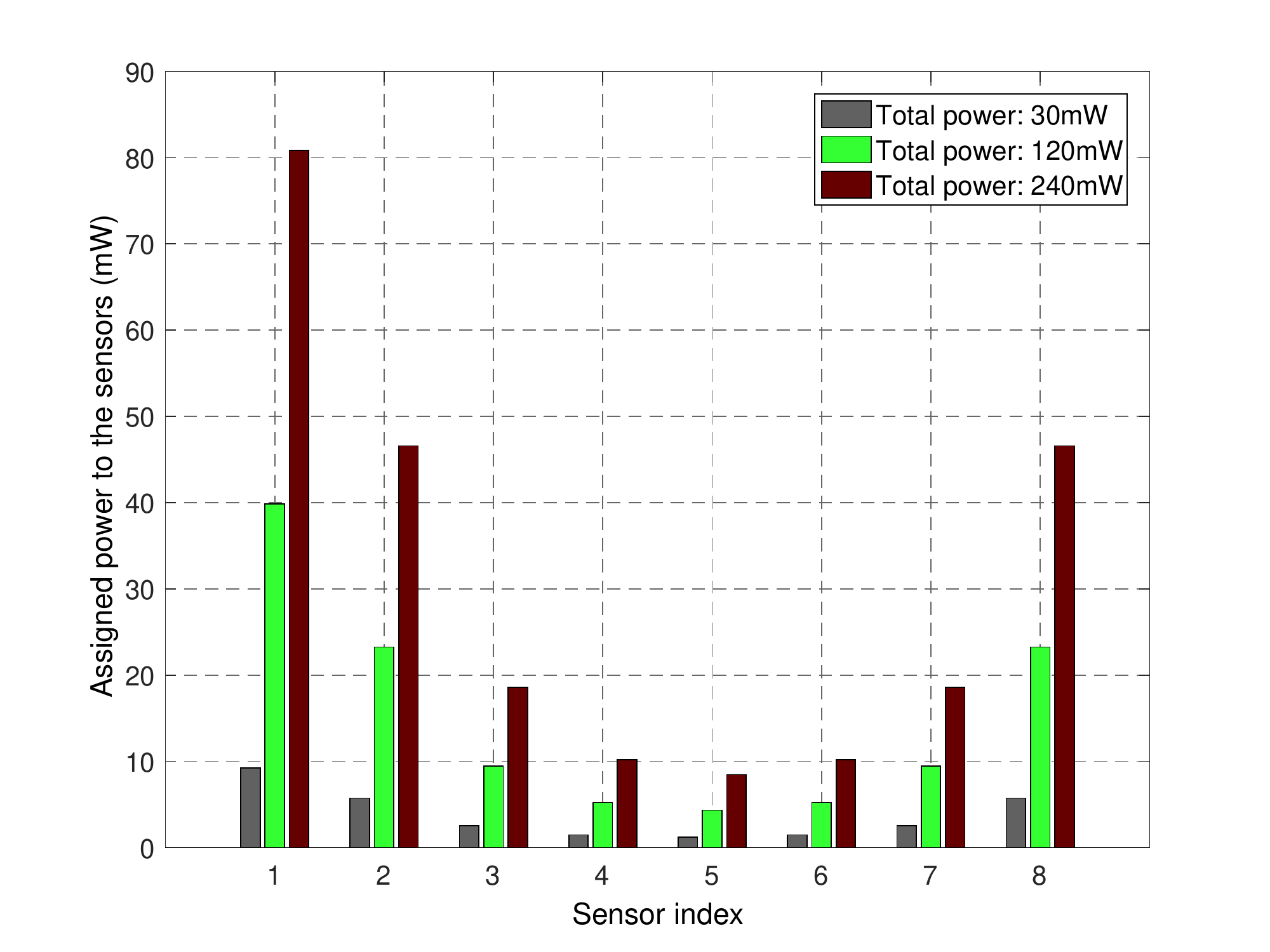}
\label{fig:coh_MAC_case1_09}
}
\subfigure[]{
\includegraphics[width=2.8in]{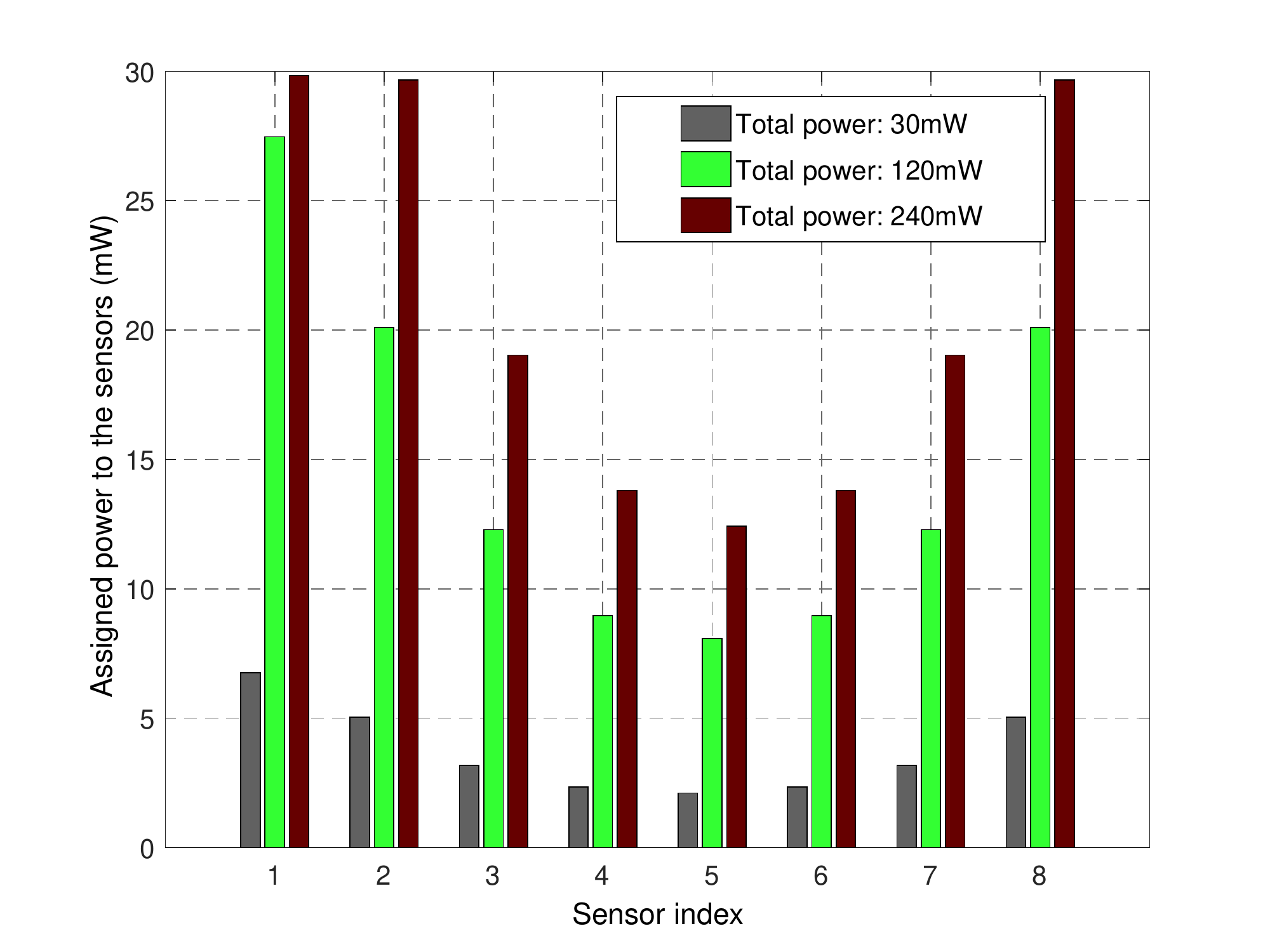}
\label{fig:coh_MAC_case2_01}
}
\subfigure[]{
\includegraphics[width=2.8in]{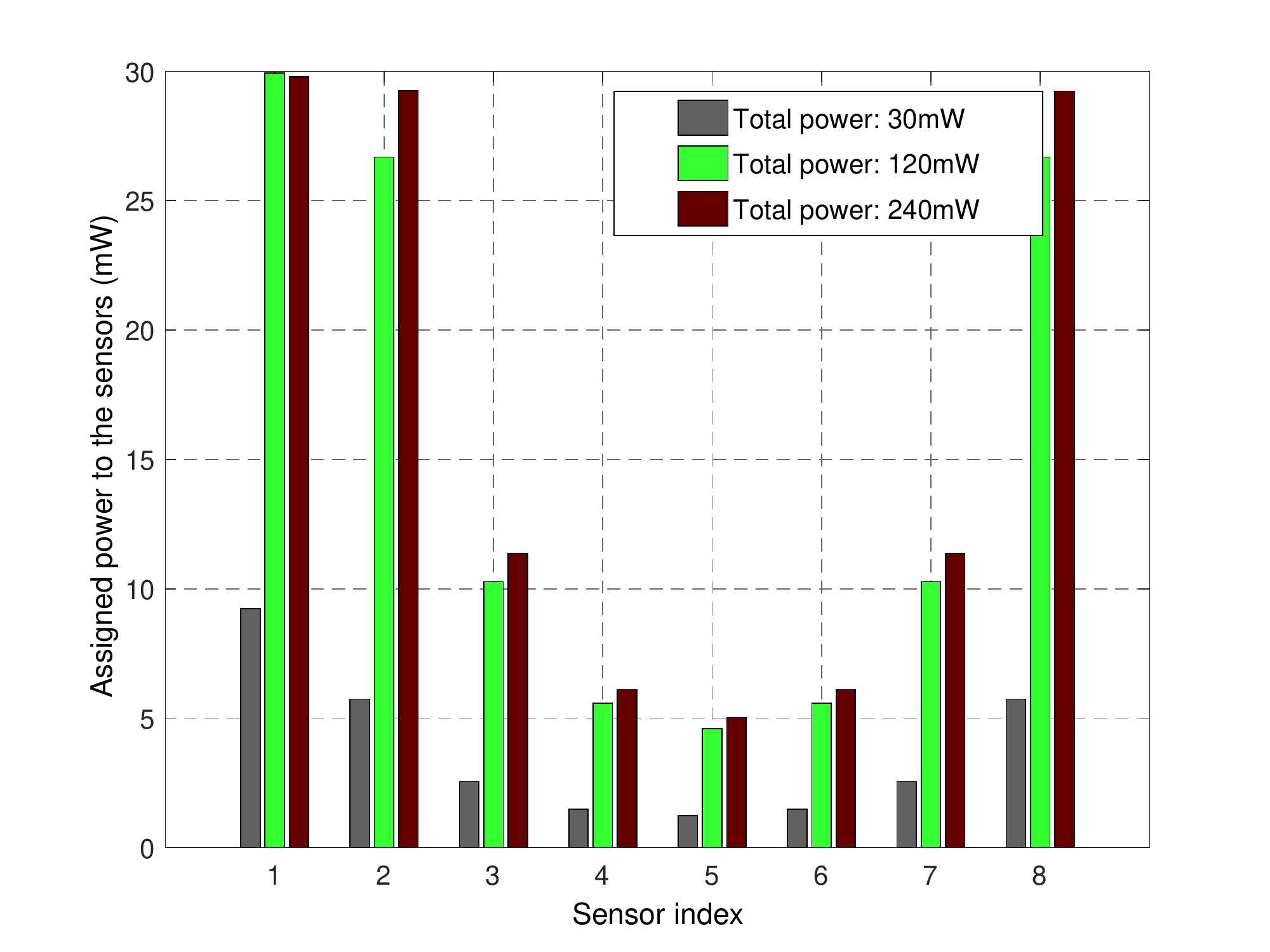}
\label{fig:coh_MAC_case2_09}
}
\subfigure[]{
\includegraphics[width=2.8in]{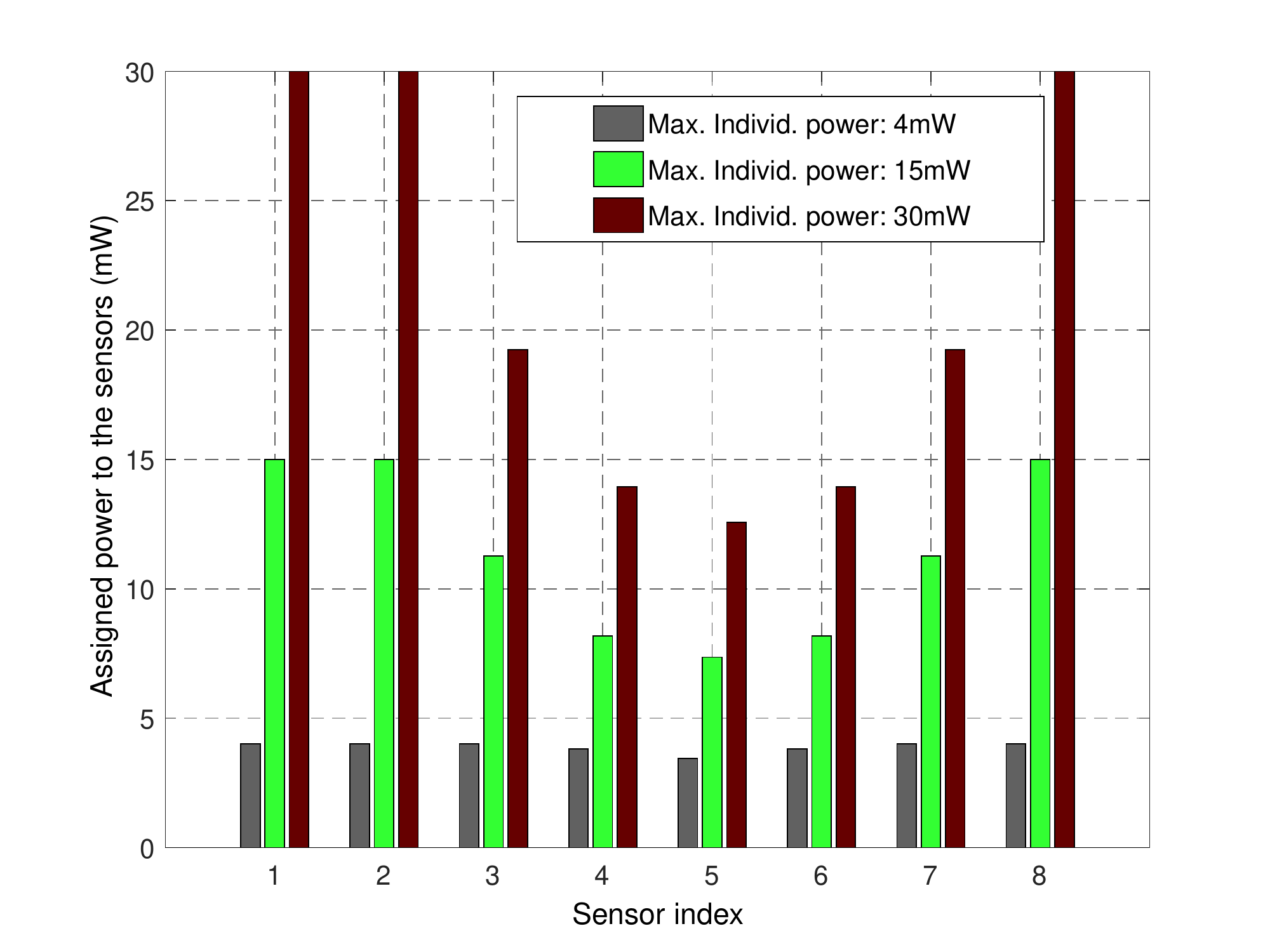}
\label{fig:coh_MAC_case3_01}
}
\subfigure[]{
\includegraphics[width=2.8in]{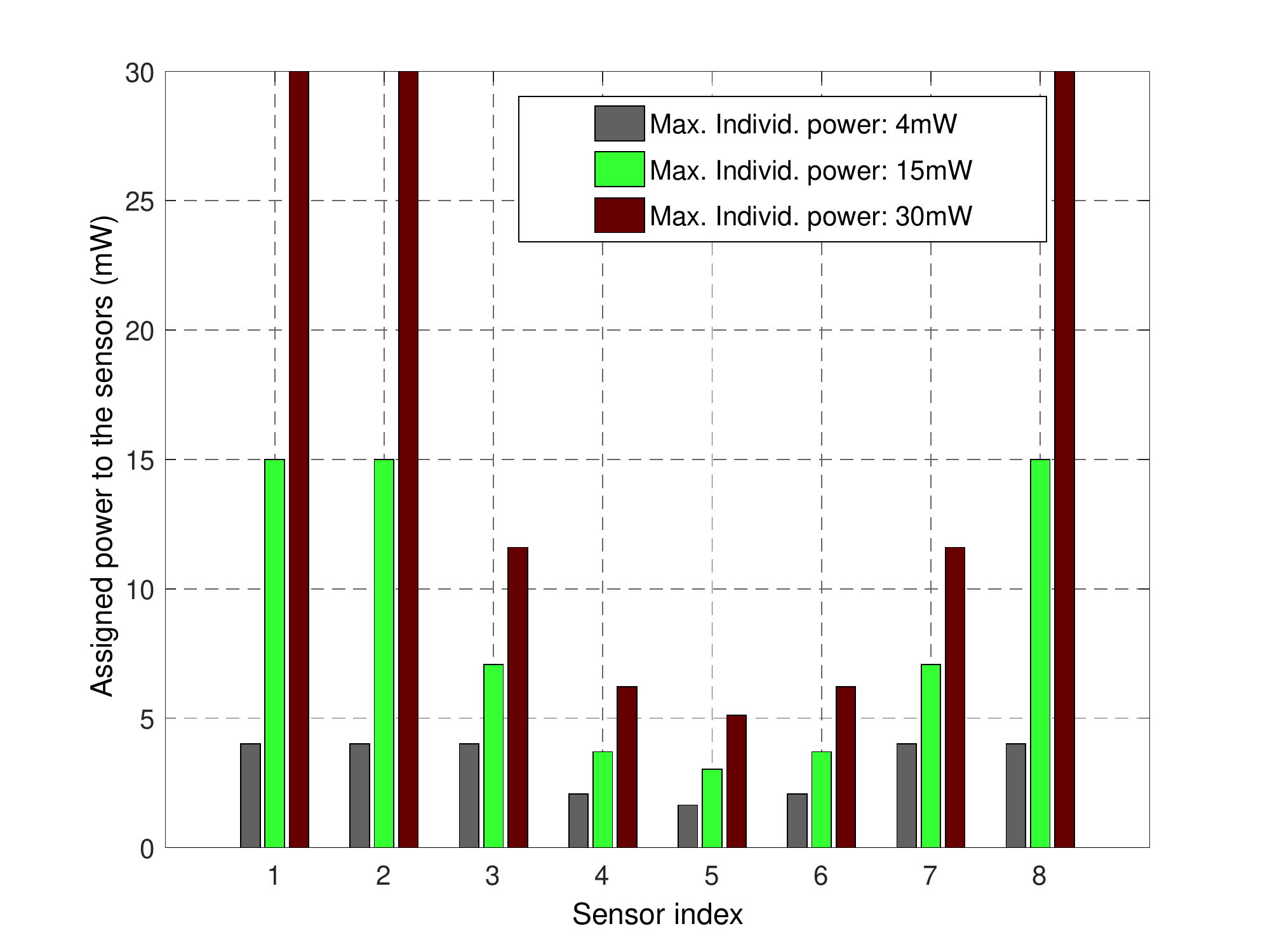}
\label{fig:coh_MAC_case3_09}
}
\caption{DPA in MAC with different $p_{d_k}$'s and identical pathloss: (a) Maximized MDC under TPC,  $\rho =0.1$; (b) Maximized MDC under TPC, $\rho =0.9$; (c) Maximized MDC under TIPC, $\rho =0.1$, $\bar{\cP} = 30\, \mW$, (d)  Maximized MDC under TIPC, $\rho =0.9$, $\bar{\cP} = 30\, \mW$; (e) Maximized MDC under IPC, $\rho =0.1$, (f) Maximized MDC under IPC, $\rho =0.9$.}
\label{fig:coh_MAC}
\end{figure}
\begin{figure}[b]
\centering
\subfigure[]{
\includegraphics[width=2.8in]{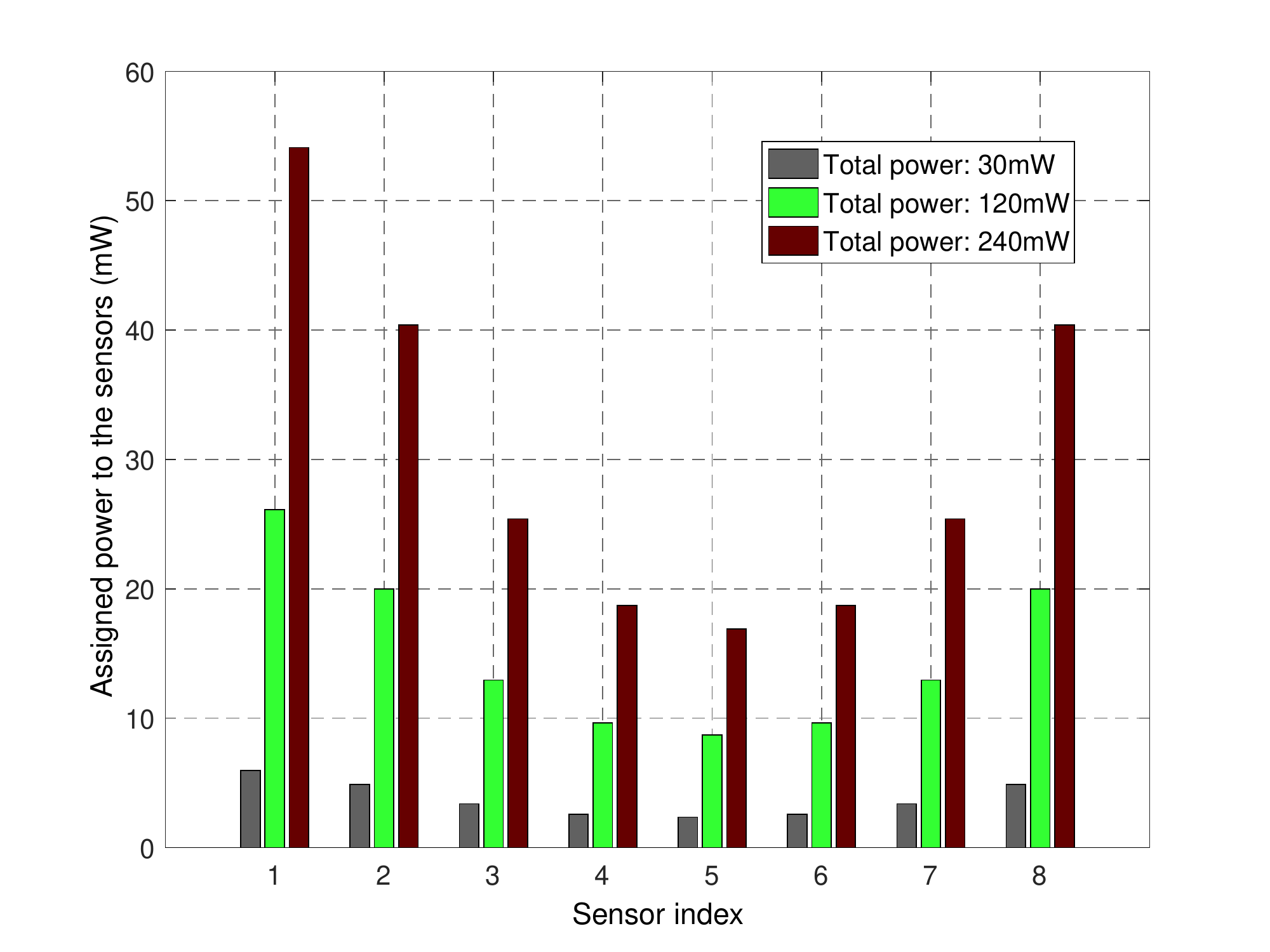}
\label{fig:coh_PAC_case1_01}
}
\subfigure[]{
\includegraphics[width=2.8in]{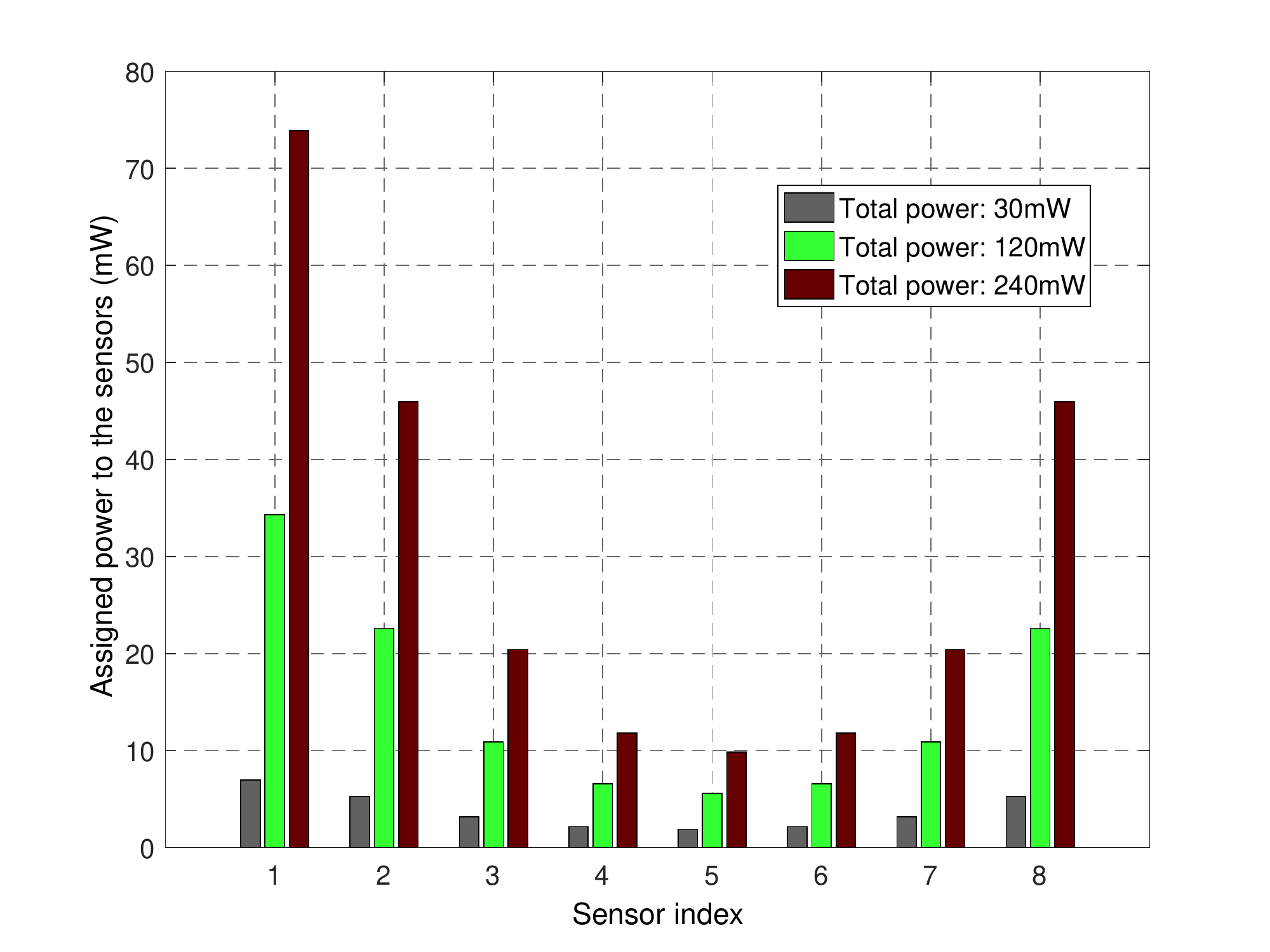}
\label{fig:coh_PAC_case1_09}
}
\subfigure[]{
\includegraphics[width=2.8in]{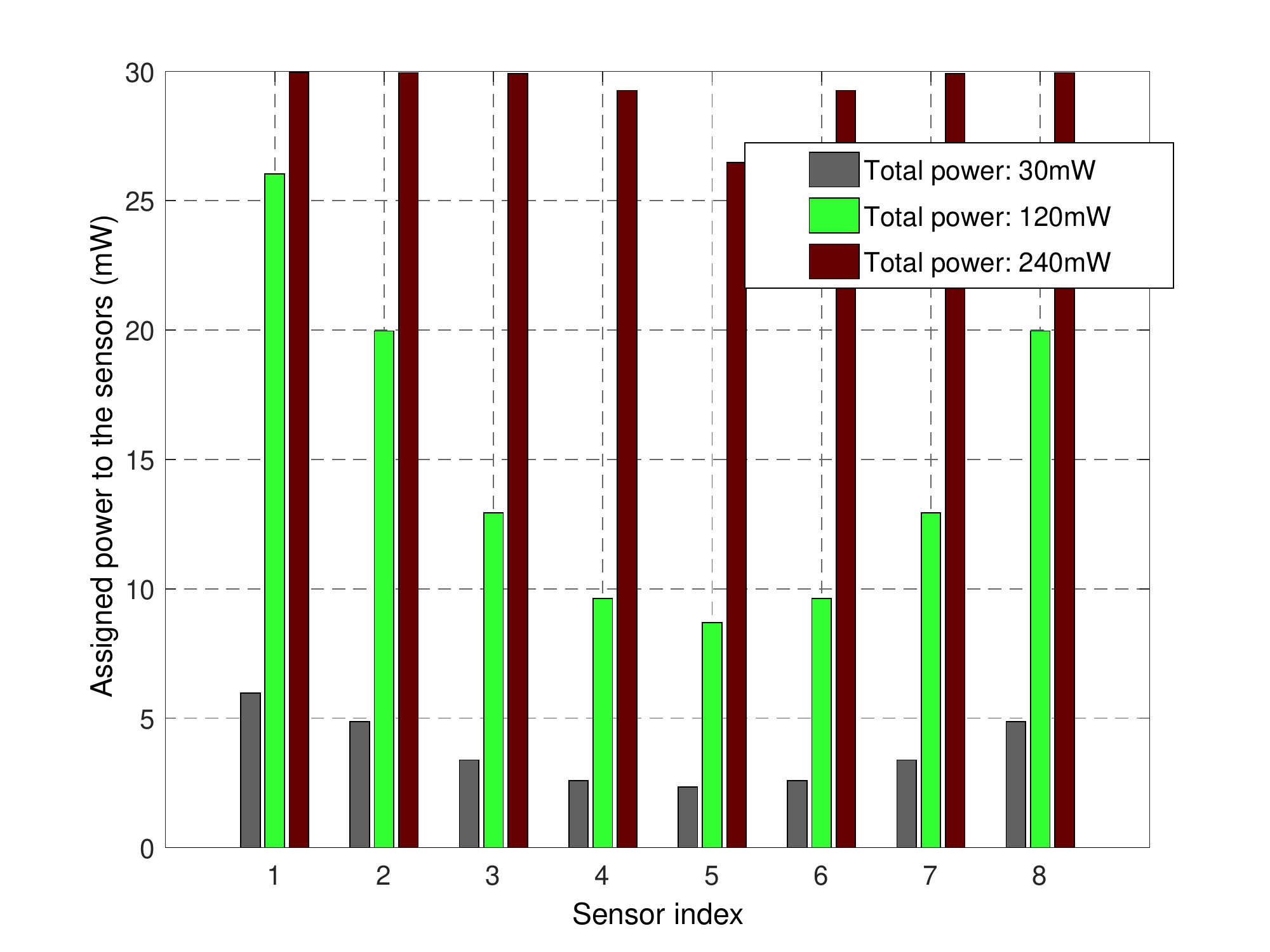}
\label{fig:coh_PAC_case2_01}
}
\subfigure[]{
\includegraphics[width=2.8in]{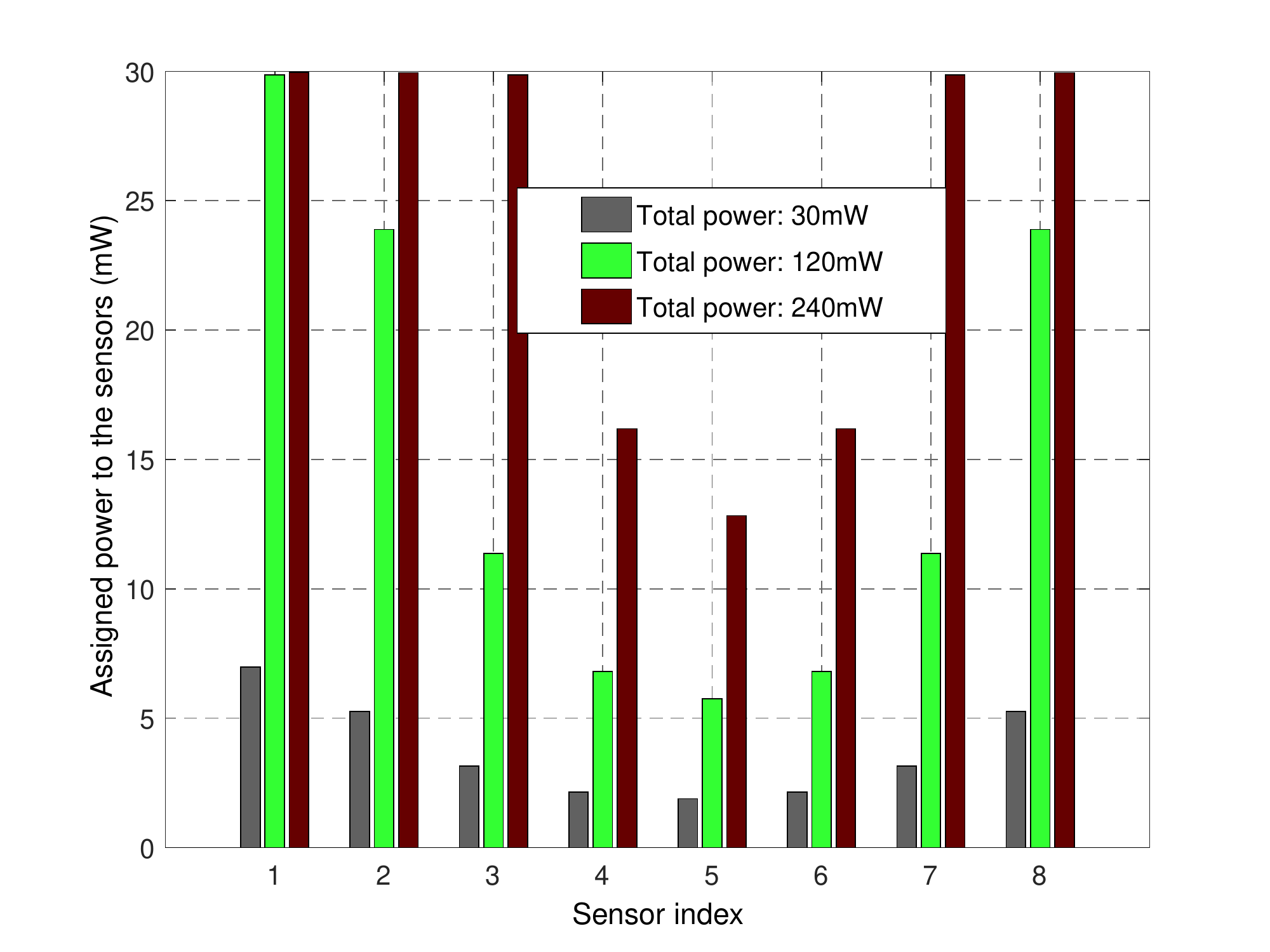}
\label{fig:coh_PAC_case2_09}
}
\subfigure[]{
\includegraphics[width=2.8in]{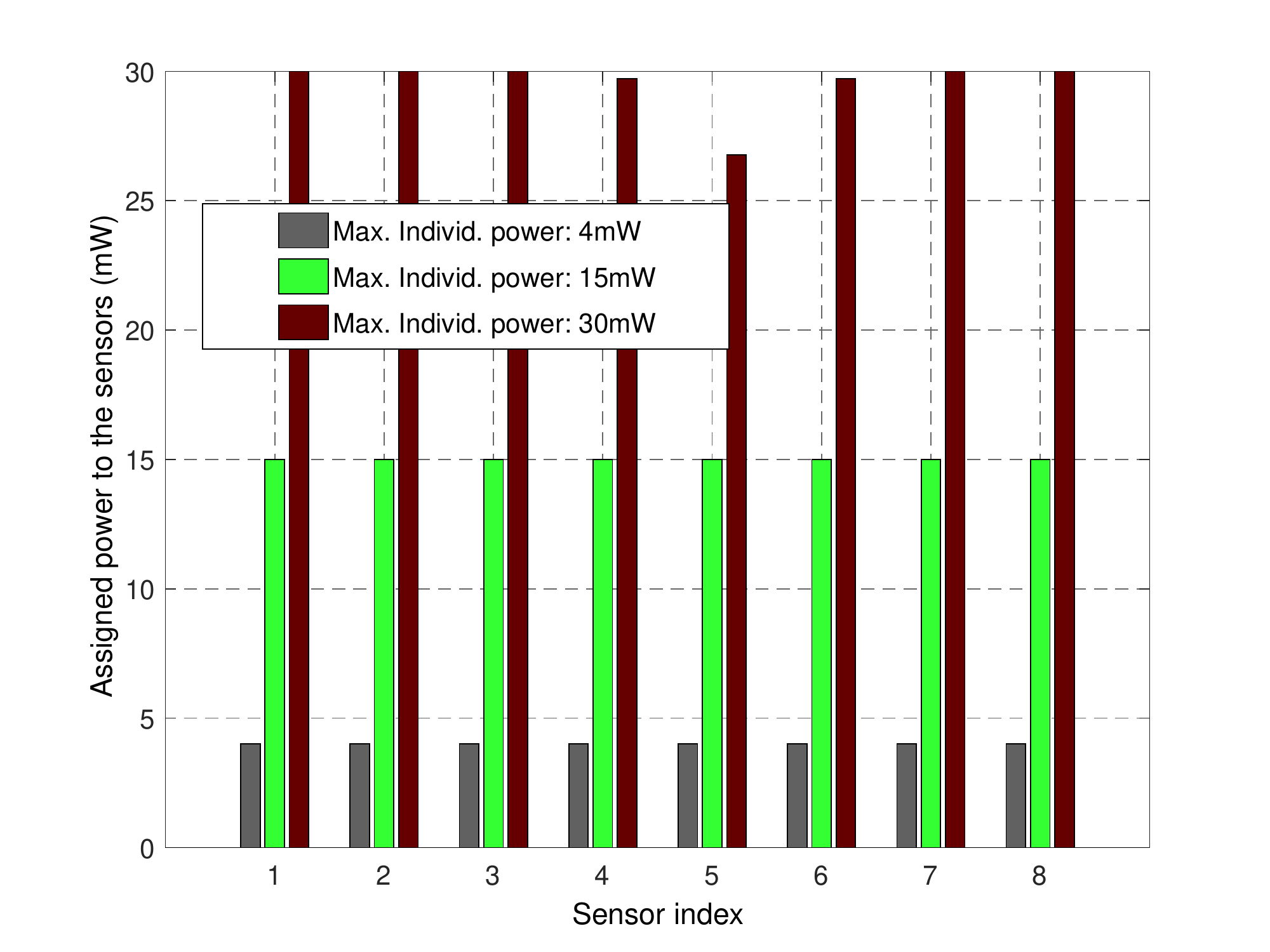}
\label{fig:coh_PAC_case3_01}
}
\subfigure[]{
\includegraphics[width=2.8in]{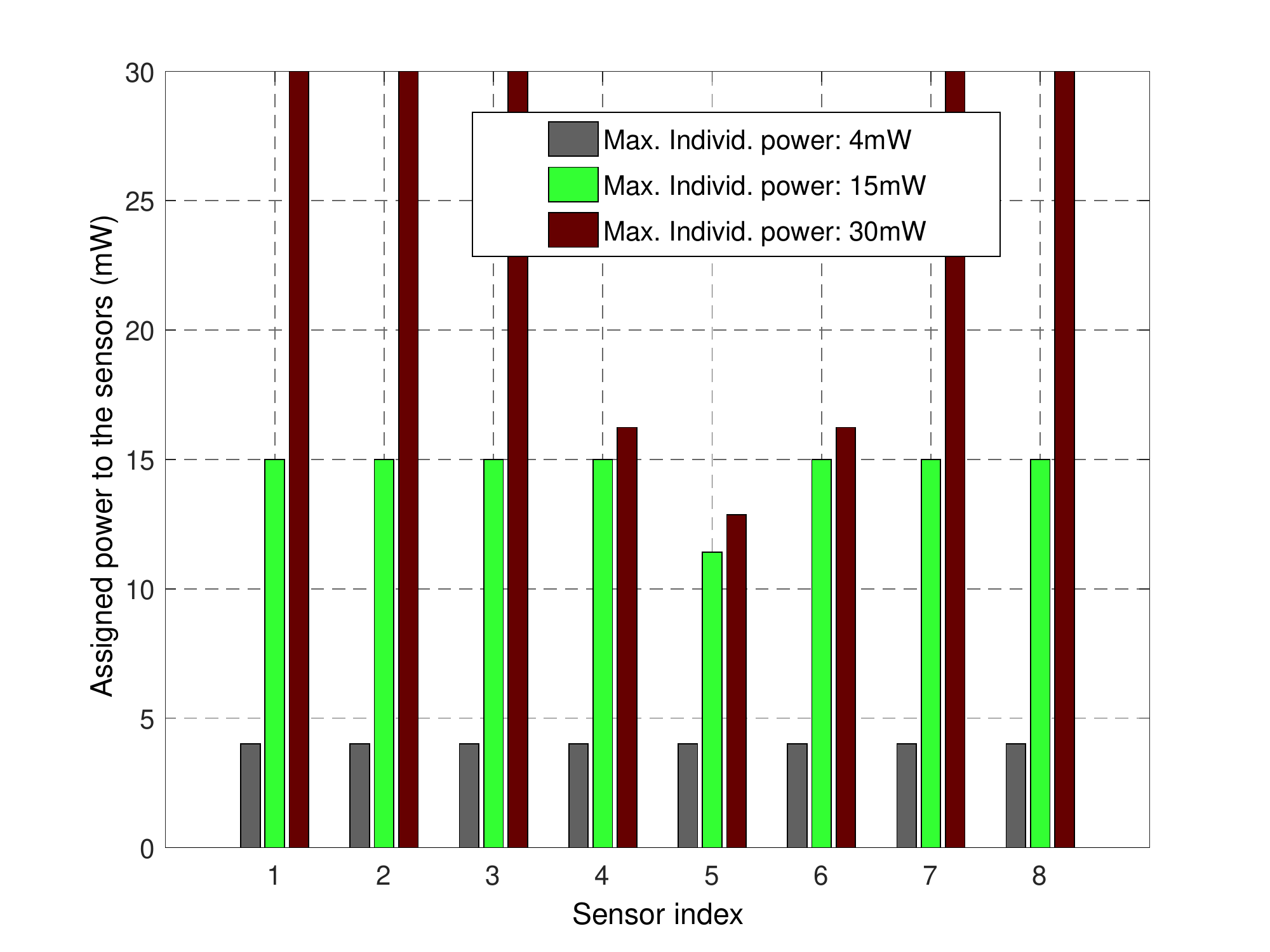}
\label{fig:coh_PAC_case3_09}
}
\caption{DPA in PAC with different $p_{d_k}$'s and identical pathloss: (a) Maximized MDC under TPC,  $\rho =0.1$; (b) Maximized MDC under TPC, $\rho =0.9$; (c) Maximized MDC under TIPC, $\rho =0.1$, $\bar{\cP} = 30\, \mW$, (d)  Maximized MDC under TIPC, $\rho =0.9$, $\bar{\cP} = 30\, \mW$; (e) Maximized MDC under IPC, $\rho =0.1$, (f) Maximized MDC under IPC, $\rho =0.9$.}
\label{fig:coh_PAC}
\end{figure}
%
%===============Trends of DPA when pathloss across sensors are different===================
%
\begin{figure}[b]
\centering
\subfigure[]{
\includegraphics[width=2.4in,height=1.6in]{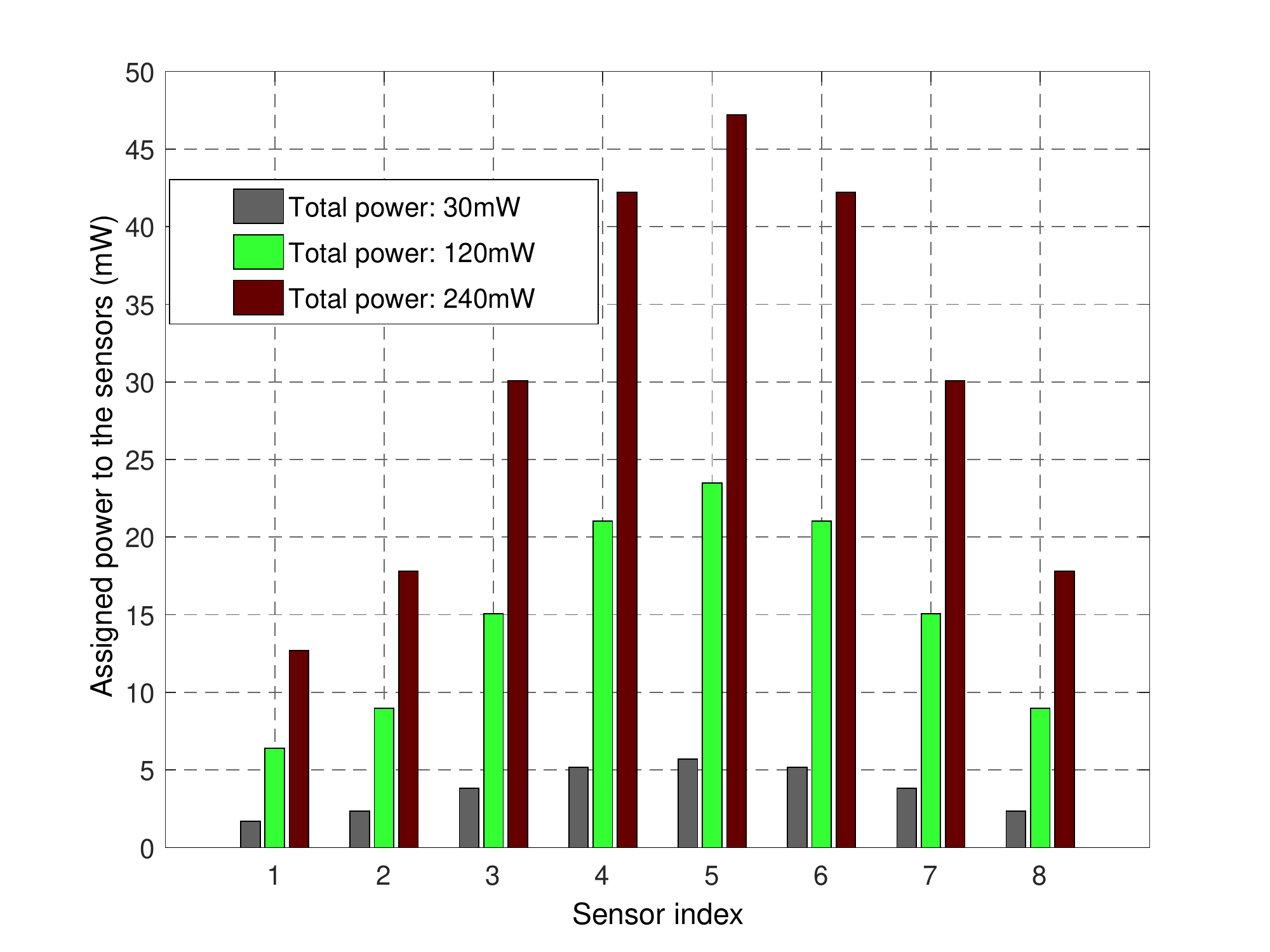}
\label{fig:coh_MAC_case1_01-pathloss2}
}
\subfigure[]{
\includegraphics[width=2.4in,height=1.6in]{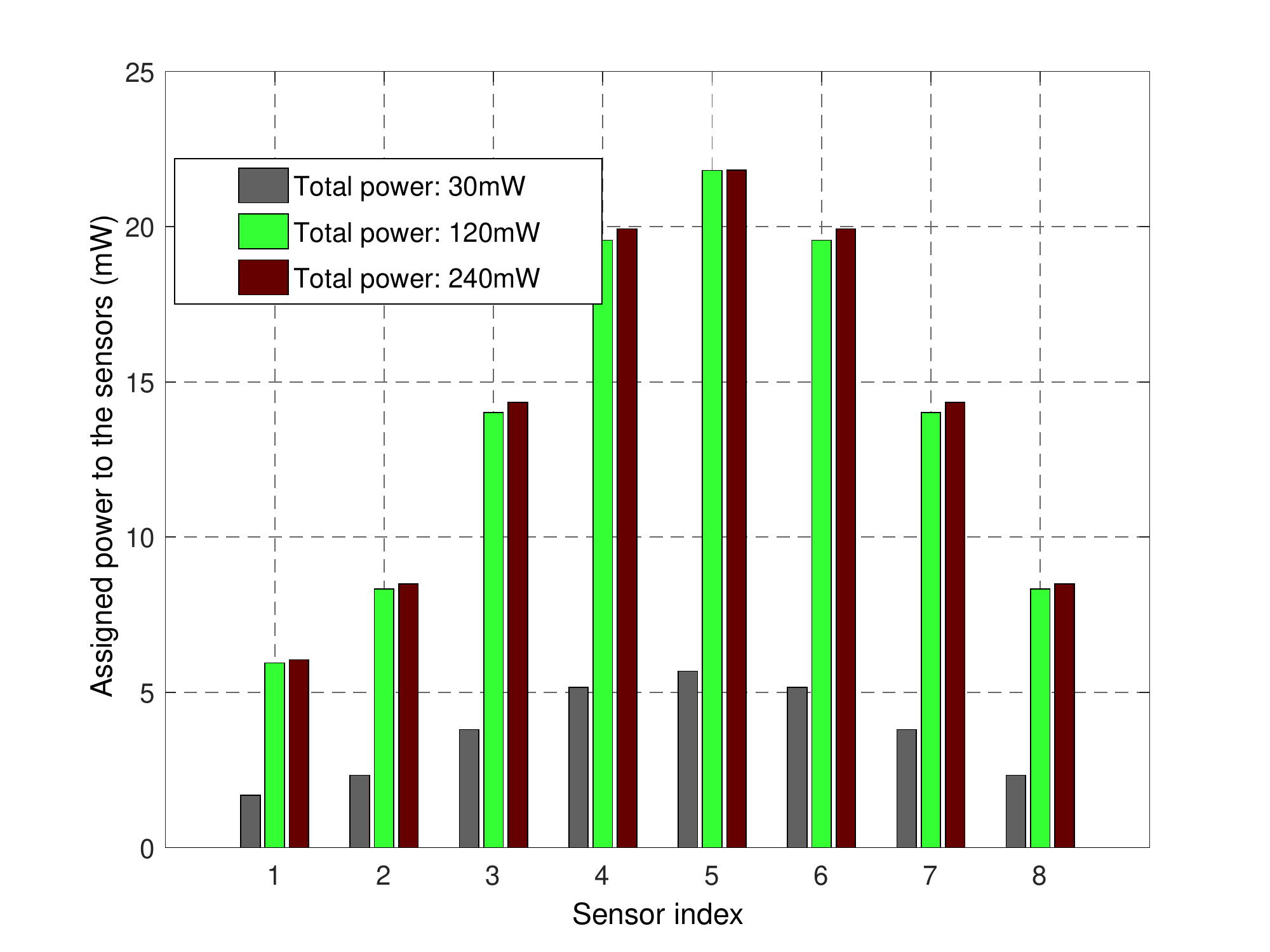}
\label{fig:coh_MAC_case2_01-pathloss2}
}
\subfigure[]{
\includegraphics[width=2.4in,height=1.6in]{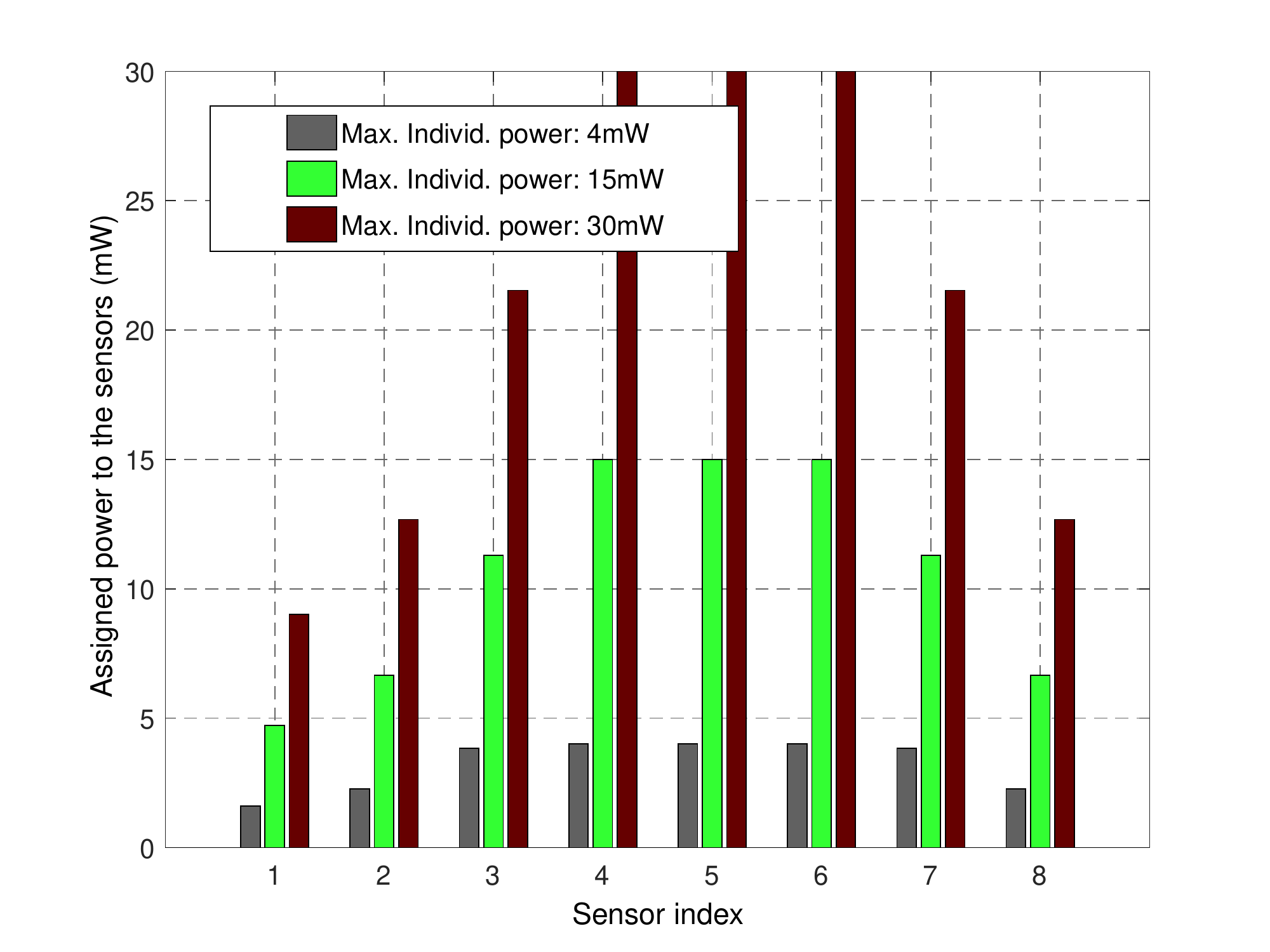}
\label{fig:coh_MAC_case3_01-pathloss2}
}
\caption{DPA in MAC with identical $p_{d_k}$'s, different pathloss and $\rho \!=\!0.1$: (a) Maximized MDC under TPC, (b) Maximized MDC under TIPC,  $\bar{\cP} \!=\! 30\, \mW$,  (c) Maximized MDC under IPC. }
\label{fig:coh_MAC-pathloss2}
\end{figure}
\begin{figure}[b]
\centering
\subfigure[]{
\includegraphics[width=2.4in,height=1.6in]{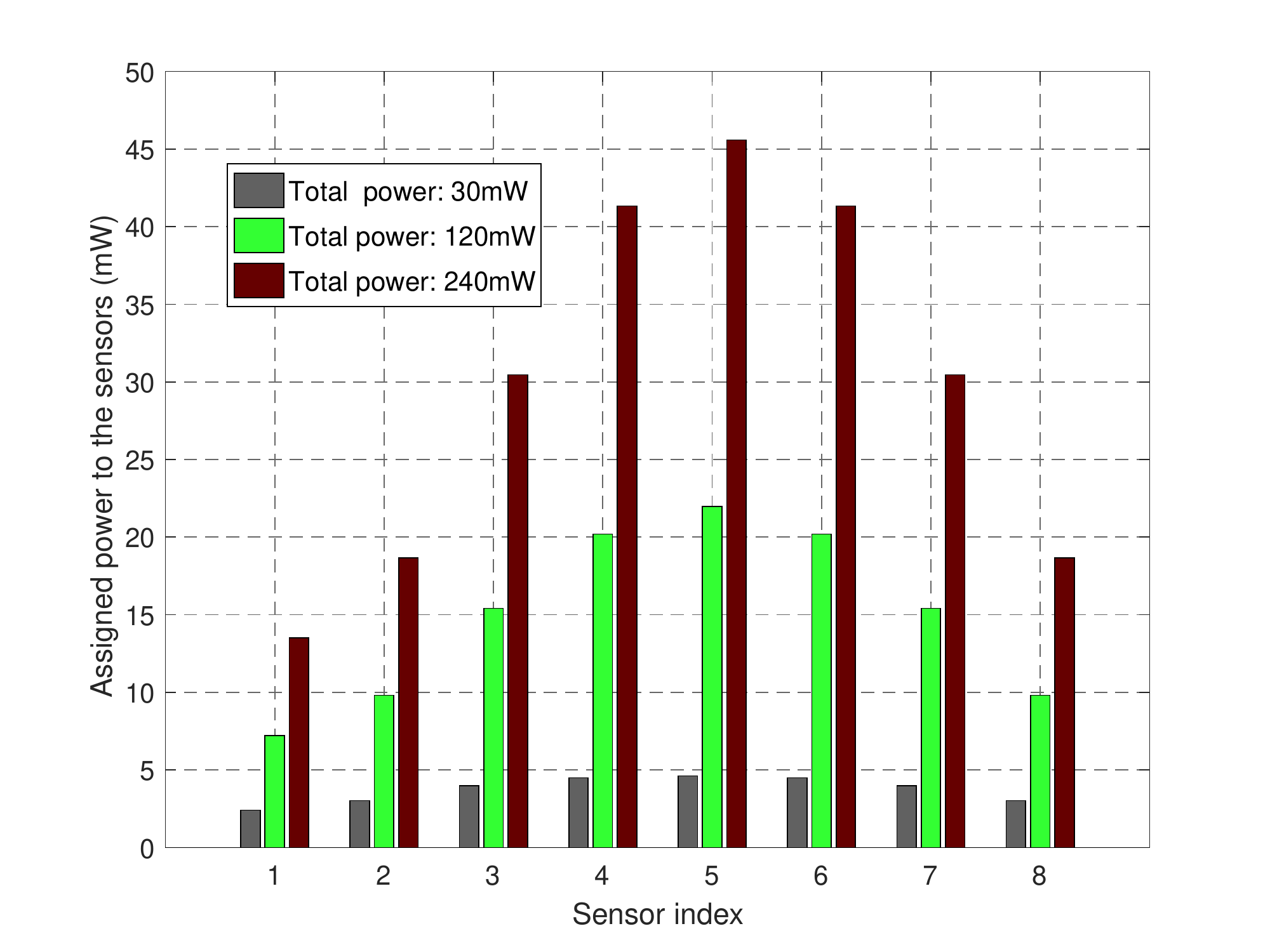}
\label{fig:coh_PAC_case1_01-pathloss2}
}
\subfigure[]{
\includegraphics[width=2.4in,height=1.6in]{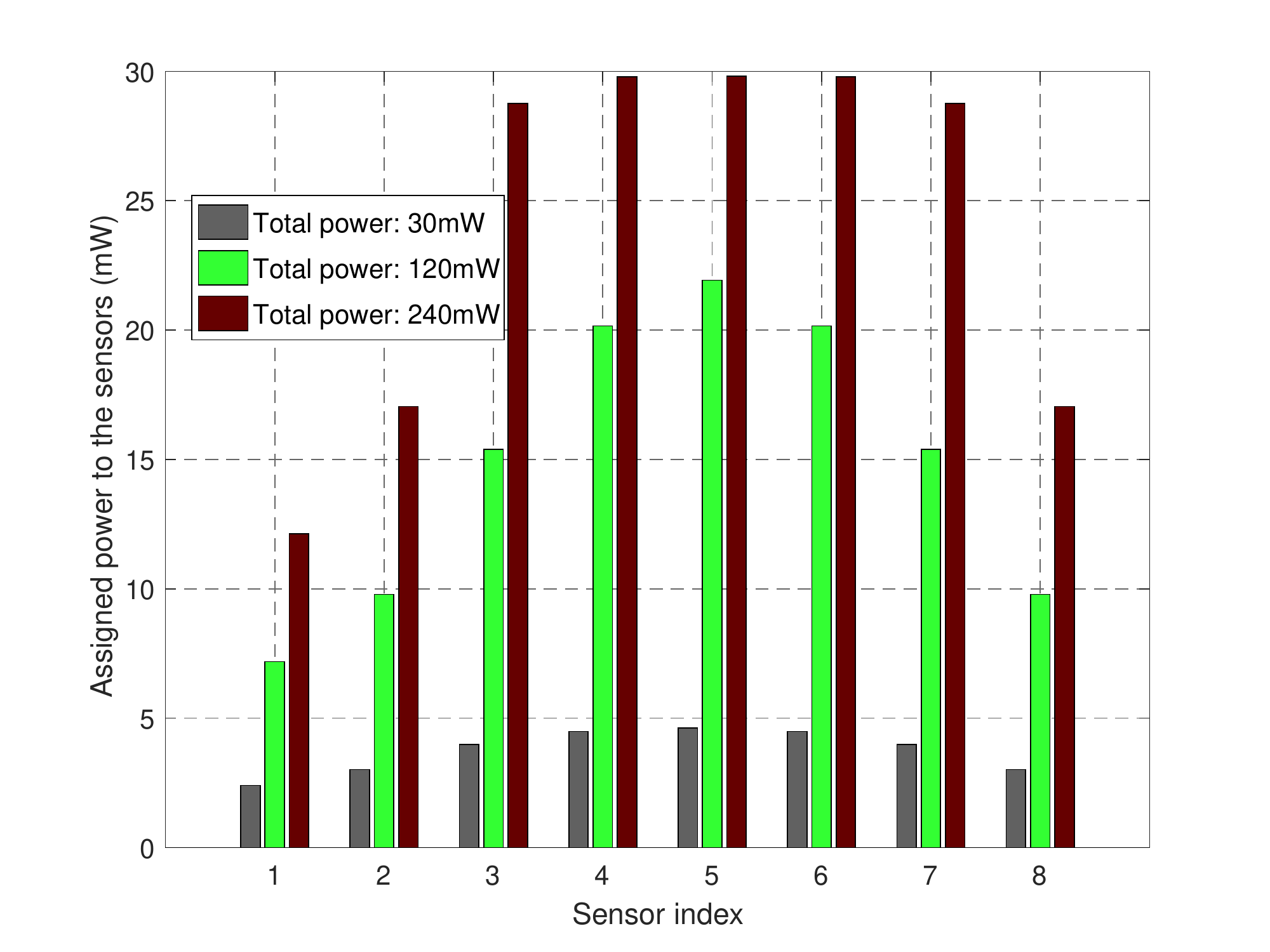}
\label{fig:coh_PAC_case2_01-pathloss2}
}
\subfigure[]{
\includegraphics[width=2.4in,height=1.6in]{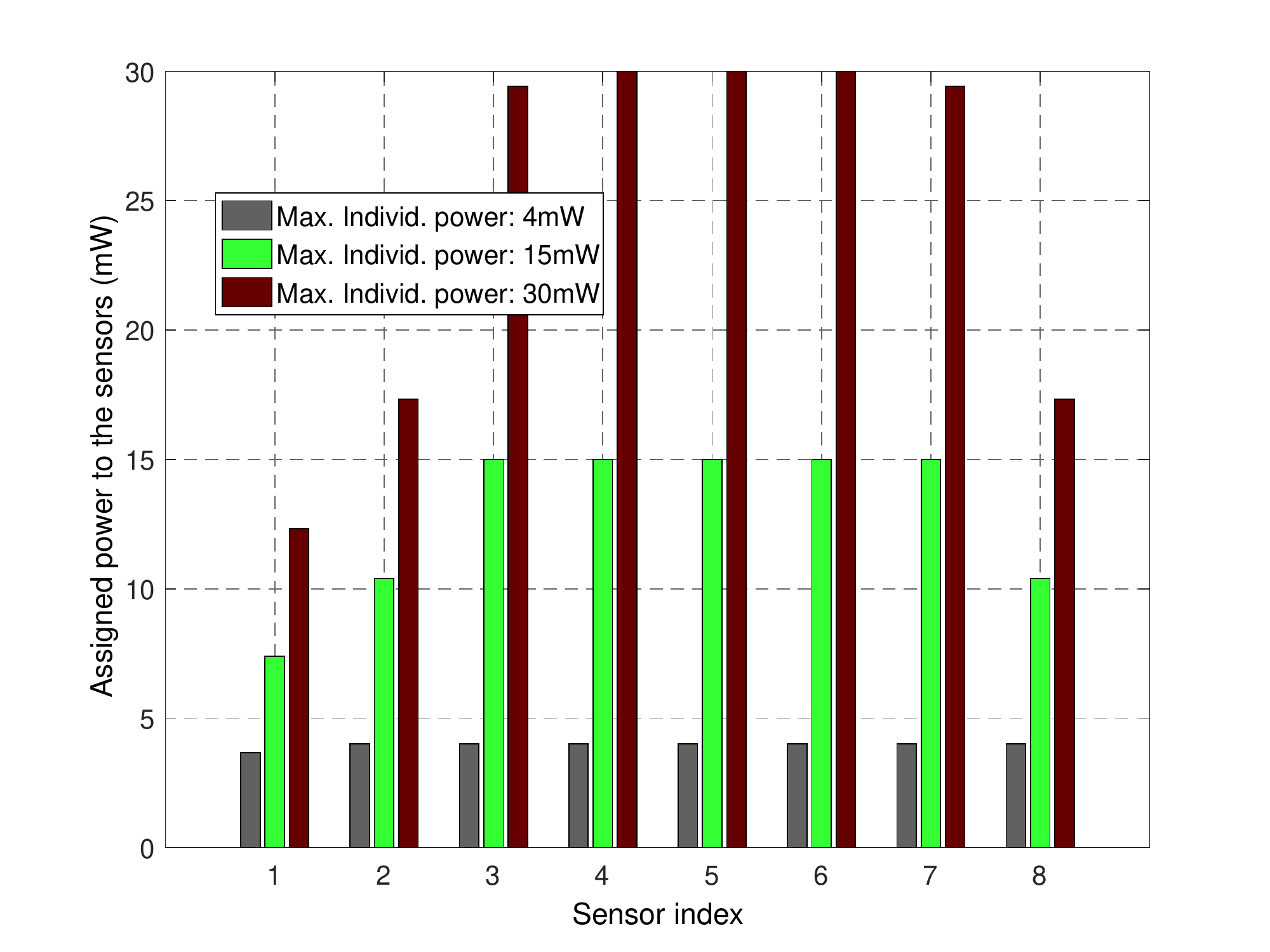}
\label{fig:coh_PAC_case3_01-pathloss2}
}
\caption{DPA in PAC with identical $p_{d_k}$'s, different pathloss and $\rho \!=\!0.1$: (a) Maximized MDC under TPC, (b) Maximized MDC under TIPC,  $\bar{\cP} \!=\! 30\, \mW$,  (c) Maximized MDC under IPC. }
\label{fig:coh_PAC-pathloss2}
\end{figure}
%
%===========================================
\fi
\end{document}